\def\BState{\State\hskip-\ALG@thistlm}
\newcommand\reallywidehat[1]{%
\savestack{\tmpbox}{\stretchto{%
  \scaleto{%
    \scalerel*[\widthof{\ensuremath{#1}}]{\kern-.6pt\bigwedge\kern-.6pt}%
    {\rule[-\textheight/2]{1ex}{\textheight}}
  }{\textheight}%
}{0.5ex}}%
\stackon[1pt]{#1}{\tmpbox}%
}
\newtheorem*{theorem*}{Theorem}
\newtheorem{theorem}{Theorem}[section]
\newtheorem{lemma}[theorem]{Lemma}
\newtheorem{claim}[theorem]{Claim}
\newtheorem{assumption}[theorem]{Assumption}
\newtheorem{definition}[theorem]{Definition}
\DeclareMathOperator{\enc}{Enc}
\DeclareMathOperator{\dec}{Dec}
\DeclareMathOperator{\view}{view}
\DeclareMathOperator{\simu}{sim}
\begin{document}
\title{Securely Trading Unverifiable Information without Trust}
\author{Yuqing Kong\\ Peking University \and Yiping Ma\\ Peking University \and Yifan Wu\\ Peking University}
\date{}
\maketitle

\begin{abstract}
In future, information may become one of the most important assets in economy.
 However, unlike common goods (e.g. clothing), information is troublesome in trading since the information commodities are \emph{vulnerable}, as they lose their values immediately after revelation, and possibly \emph{unverifiable}, as they can be subjective. By authorizing a trusted center (e.g. Amazon) to help manage the information trade, traders are ``forced'' to give the trusted center the ability to become an information monopolist. 

To this end, we need a \emph{trust-free} (i.e. without a trusted center and with only strategic traders) unverifiable information trade protocol such that it 1) motivates the sellers to provide high quality information, and the buyer to pay for the information with a fair price (\emph{truthful}); 2) except the owner, the information is known only to its buyer if the trade is executed (\emph{secure}). 

In an unverifiable information trade scenario (e.g. a medical company wants to buy experts' opinions on multiple difficult medical images with unknown pathological truth from several hospitals), we design a trust-free, truthful, and secure protocol, \emph{Smart Info-Dealer (SMind)} , for  information trading, by borrowing three cutting-edge tools that include peer prediction, secure multi-party computation, and smart contract. With SMind, without a trusted center, a seller with high-quality information is able to sell her information securely at a fair price and those with low-quality information cannot earn extra money with poor information or steal information from other sellers. We believe SMind will help describe a free and secure information trade scenario in future. 


\end{abstract}



\section{Introduction}\label{sec:intro}
Nowadays people come to realize the importance of information assets.
 However, information commodities differ from most entity goods substantially because \emph{information commodities lose their values immediately after revelation}. Customers can look around a clothing store before making their purchases, while an information store should hide all information from the public, otherwise everyone learns the information and no one will pay for it. However, without the chance to see the goods, how can a buyer decide whether to pay for certain pieces of information as the way she shops in the clothing store?

Moreover, even if a piece of information is disclosed, an information buyer would still be confused, especially when it is unverifiable (e.g. subjective opinions). A costumer shopping her clothes can make a purchase based on the style of the clothing, while the information buyer may have no clue about the value of the disclosed information, which can be meaningless or even fake. 

Furthermore, even if there exists a trusted center who is able to magically evaluate the quality of the unverifiable information without revealing it to the buyer, traders are ``forced'' to authorize this powerful center to access the information.
A commonly-trusted information trading center may gradually grow into an information monopolist, which has been observed in the cases of google, Amazon etc. Also, dependency on a trusted center severely limits the trade of the information asset.  

Despite several challenges as mentioned above, information trading is prevalent in contemporary world. A running example is that a medical company (the buyer) wants to buy labels (e.g. hard labels like benign/malignant, or soft labels like 90\% benign) for multiple difficult medical images with \emph{unknown pathological truth} from multiple hospitals (the sellers). To tackle the challenges discussed above, we aim to design an information trade protocol for \emph{unverifiable} information, which satisfies the following three properties:



\begin{description}
\item[Trust-free] a trusted center is not required, and every trader, including the hospitals and the medical company, is rational/strategic rather than required to be honest

\item [Truthful]  1) each hospital (the seller) is incentivized to provide truthful and high-quality information, i.e. labels for the medical images; 2) the medical company (the buyer) is incentivized to pay the information with a fixed payment function;

\item [Secure] except the final trade price, each hospital cannot infer additional information about other hospital's information. The medical company can only obtain the information if and only if the trade protocol is successfully executed.
\end{description}

A recent line of research, peer prediction (e.g.\cite{miller2005eliciting,prelec2004bayesian,dasgupta2013crowdsourced,2016arXiv160501021K,2016arXiv160303151S,kong2018water}), focuses on designing mechanisms that elicit truthful, but unverifiable information. Peer prediction mainly measures each seller's information quality based on her \emph{peer}'s reported information and pays for the measured quality. In our running example of two hospitals A and B, the quality of their information is measured according to the ``similarity'' between their labels. The ``similarity'' measure is carefully designed such that the hospitals will be incentivized to report honestly and a hospital who reports meaningless information (e.g. label randomly or always label benign) will obtain the lowest payment. However, peer prediction does not ensure privacy of participants and implicitly requires a trusted center to compute the ``similarity'' and to make the exchange of payment and information. 

A cryptography protocol, secure multi-party computation (MPC \cite{yao1986generate,goldreich1987play,beaver1990round, ben1988completeness})\footnote{MPC is the abbreviation of secure multi-party computation in standard cryptographic notion, we will use this notion in the subsequent part of this paper.}, allows us to address the above security issue. MPC is a multi-party protocol such that the process of computation reveals nothing beyond the output from the perspective of protocol participants. Thus, combined with peer prediction, MPC allows sellers to compute the quality of the information without violating privacy of each sellers. However, different from the traditional MPC scenario where a party can be either honest or malicious, in our setting all parties are strategic.


Although MPC achieves privacy-preserving computation of information quality, i.e. the payment of the information, the trust-free fair exchange of payment and information still has not been guaranteed. This gives rise to new incentive issues. A recent work \cite{dziembowski2018fairswap} proposes a smart contract based solution for the trust-free fair exchange. 

We borrow this smart contract based solution and combine with the other two cutting-edge tools, peer prediction, MPC, to propose a trust-free, truthful, and secure information trade protocol for unverifiable information, \emph{Smart Info-Dealer (SMind)}. In our running example with SMind which does not require the existence of trusted center, the medical company is able to buy high-quality information securely from high-quality hospitals at a fair price and a low-quality hospital cannot earn money with poor information or steal other hiqh-quality hospitals' information. 
Thus, in a new world where the information becomes one of the most important assets in economy, we believe our method will help describe a free and secure information trade scenario.

\subsection{Road map}
Section \ref{sec:related} discusses related work. Section \ref{sec:prelim} introduces basic game theory and cryptographic concepts and also the main three building blocks of SMind. Section \ref{sec:smartmpcpp} formally introduces the information trade setting, the protocol design goals and our protocol, SMind. Section \ref{sec:proof} shows our main theorem: SMind is trust-free, truthful, and secure. Section \ref{sec:ext} shows a natural extension from 2-seller setting to multiple sellers. In the end, Section \ref{sec:futurework} concludes by discussing the robustness and implementation of SMind.

\section{Related work}\label{sec:related}

\paragraph{Decentralized prediction market} With the bloom of researches on blockchain, decentralized information trading platforms are developed on chain \cite{peterson2018augur, team2017gnosis, adler2018astraea, yandeaux}, mostly addressing the decentralization of prediction market. However, their settings are totally different from our work. The prediction market is essentially an \emph{verifiable} information trade platform and assumes that the ground truth will be revealed in the future. Also, our work enables instant payment, while participants in prediction market should wait for the truth to be revealed. 
\paragraph{Peer prediction} A recent line of research, peer prediction (e.g.\cite{miller2005eliciting,prelec2004bayesian,dasgupta2013crowdsourced,2016arXiv160501021K,2016arXiv160303151S,kong2018water}), focuses on designing mechanisms that elicit truthful, but unverifiable information. Unlike prediction market, peer prediction does not assume the existence of ground truth. However, peer prediction does not consider the security issue and implicitly requires a trusted center to compute the payment and to transfer the payment and the information. Our work raises and addresses the security issues of peer prediction. 
\paragraph{Outsourced computation}Works on outsourced computation aim to verify the correctness of computation \cite{walfish2015verifying, lamport1982byzantine, dong2017betrayal}. However, outsource computation addresses mostly on verifiable computation, while our setting considers unverifiable information. Also, their works do not take into account security issues and assume that there is a trusted judge for arbitration, while ours ensures privacy and removes a trusted center from the trading. 
\paragraph{Secure Multi-party Computation}
The notion of secure multi-party computation is first introduced as an open problem by Yao in 1980s\cite{yao1986generate}, and later there comes many protocols such as \cite{goldreich1987play,beaver1990round, ben1988completeness}, etc. Since 2000s building practical systems using general-purpose multi-party computation becomes realistic due to algorithmic and computing improvements \cite{malkhi2004fairplay, damgaard2012multiparty, nielsen2012new, wang2017authenticated}. MPC has been applied to varies of areas, e.g. auction \cite{bogetoft2009secure, bogetoft2006practical}, machine learning \cite{mohassel2017secureml, mohassel2018aby} to address the security and privacy issue. In the current paper, we apply MPC to a new field, eliciting unverifiable information, to address the security issue. Works on \emph{strategic MPC} introduce the notion of rational protocol design in MPC \cite{garay2013rational, izmalkov2005rational, asharov2011towards}. Our protocol can also be considered under the strategic MPC setting. By carefully designing economic rewards and punishments, the parties will be motivated to run MPC truthfully. However, we consider an information trade setting which is totally different from previous strategic MPC works.
\paragraph{Blockchain and smart contract}
Firstly proposed in Bitcoin \cite{nakamoto2008bitcoin}, blockchain is built upon consensus protocols, with security guaranteed by honest majority on chain \cite{garay2015bitcoin, sompolinsky2015secure}.  Ethereum \cite{buterin2014next} first realizes smart contract on blockchain. It can be defined as enforced agreements between distrusted parties \cite{clack2016smart}. In recent works on fairly exchanging \emph{verifiable} digital goods \cite{teutsch2017scalable, dziembowski2018fairswap}, smart contract works as a trusted third person to verify the incorrectness of trade process when needed. We borrow this idea and combine with peer prediction, MPC to propose a truthful, secure, trust-free protocol for the trade of \emph{unverifiable} information.


\section{Preliminaries}\label{sec:prelim}

\subsection{Basic game theory topics: extensive form, subgame, strategy, equilibrium concepts}\label{sec:gametheory}
Readers can refer to \cite{osborne2004introduction} for a detailed definition of basic game theory concepts. 

A \emph{game} consists of
a list of players,
a description of the players' possible actions,
a specification of what the players know at their turn and
a specification of the payoffs of players' actions.

A node $x$ in the extensive-form game defines a \emph{subgame} if  $x$ and its successors are in an information set that does not contain nodes that are not successors of $x$. A subgame is the tree structure initiated by such a node $x$ and its successors.

A \emph{strategy} is a complete plan for a player in the game, which describes the actions that the player would take at each of her possible decision points.

We define a strategy profile $\mathbf{s}$ as a profile of all agents' strategies $(s_1,s_2,...,s_n)$. Agents play $\mathbf{s}$ if for all $i$, agent $i$ plays strategy $s_i$.

A {\em (Bayesian) Nash equilibrium (B(NE))} consists of a strategy profile $\mathbf{s} = (s_1, \ldots, s_n)$ such that no agent wishes to change her strategy since other strategies will decrease her expected payment, given the strategies of the other agents and the information contained in her prior and her signal.

A strategy profile is a \emph{strong Nash equilibrium} if it represents a Nash equilibrium in which no coalition, taking the actions of its complements as given, can cooperatively deviate in a way that benefits all of its members.

A strategy profile is a \emph{subgame perfect equilibrium (SPE)} if it represents a Nash equilibrium of every subgame of the original game.

\begin{definition}[Strong Subgame Perfect Equilibrium (Strong SPE)] \label{def:strongspe}
A strategy profile is a \emph{strong subgame perfect equilibrium} if it is a SPE and a strong NE.
\end{definition}

\emph{Backward induction procedure} is the process of analyzing a game from the end to the beginning. At each decision node, one strikes from consideration any actions that are dominated, given the terminal nodes that can be reached through the play of the actions identified at successor nodes.



\subsection{Cryptographic Building Blocks}\label{sec:crypt}
\subsubsection{Basic cryptographic tools}
\begin{definition}[Encryption Scheme]
    A encryption scheme is a tuple $(\text{Gen}, \enc, \dec)$ that
    \begin{description}
        \item\textit{key generation}: having a security parameter $n$, generates a key $k\leftarrow \text{Gen}(1^{\kappa})$
        \item\textit{encryption}: Upon input a message $m$ and key $k$, output a ciphertext $ c\leftarrow \enc_k(m)$
        \item\textit{decryption}: Upon input a ciphertext $c$ and key $k$, outputs a message $m\leftarrow \dec_k(c)$
    \end{description}
\end{definition}

\begin{definition}[Commitment Scheme]
    A commitment scheme is a two-party protocol between sender and a receiver. In the first phase, the sender \textit{commits} to some value $m$ and sends the commitment to the receiver; in the second phase, the sender can \textit{open} the commitment by revealing $m$ and some auxiliary information to the receiver, which the receiver will use to verify that the value he received is indeed the value the sender committed to during the first phase.\par
    A commitment scheme is often defined as follows:\par
    \begin{description}[ font=\normalfont]\label{des:commit}
        \item\textit{Commit}: To \textit{commit to} a message $m$, the sender choose randomness \texttt{Op}\footnote{The detail of this step depends on  concrete construction, we only provide high-level description here.}, and generates the \textit{commitment} by $\texttt{Com}(m)\leftarrow Commit(\texttt{Op}, m)$.
        The sender sends \texttt{Com} to receiver.
         \item\textit{Open}: To open a \textit{commitment}, the sender sends $m$ and \texttt{Op} to receiver. Receiver computes $checkbit \leftarrow Open(\texttt{Com},\texttt{Op},m)$. If $checkbit$ is $valid$, then the receiver accepts that this message $m$ is indeed the message that the sender previously committed to. Otherwise she rejects.
    \end{description}

    A commitment scheme should satisfy the following two properties:\par
    \begin{description}\label{des:commitproperty}
        \item\textit{Hiding}: Given the commitment $\texttt{Com}(m)$ of a message $m$, the receiver should not learn anything about $m$.
        \item\textit{Binding}: Given the commitment $\texttt{Com}(m)$ of a message $m$, the sender should only be able to open the commitment without changing the messages.
    \end{description}
\end{definition}

\subsubsection{Cryptographic protocol}\label{sec:smpc}

\begin{definition}[Secure Multi-party Computation]\cite{canetti2001universally}
Secure multi-party computation (MPC) allows $n$ distrust parties to jointly compute an agreed-upon function of their private inputs without revealing anything beyond the output.
\end{definition}

When executing a MPC protocol, parties communicate with each other (receiving messages and sending messages) as well as do some local computation, and they will obtain the output of the agreed-upon function after multiple rounds. \par

Informally, A MPC protocol is secure if participants cannot infer other party's input through inspecting the list of messages she sees during the execution\footnote{For simplicity, we omit cryptographic definition details here, and we will discuss them rigorously in appendix.}. 
The formalization of the above intuition is real world/ideal world paradigm \cite{goldreich1987play}. \par

\begin{definition}[Real World/Ideal World Paradigm] \label{def:realideal}
    The security of cryptographic protocol is formalized through the notion of comparison.\par
    \begin{description}
        \item\textit{Real world}: the parties execute a prescribed protocol
        \item\textit{Ideal world}: there is a trusted agent with complete privacy guarantee, and each party sends their inputs to this trusted agent and the trusted agent computes the agreed-upon function, and then sends the output back to each party
    \end{description}
\end{definition}

\begin{figure}[htbp]
    \centering
    \includegraphics[width=0.6\textwidth]{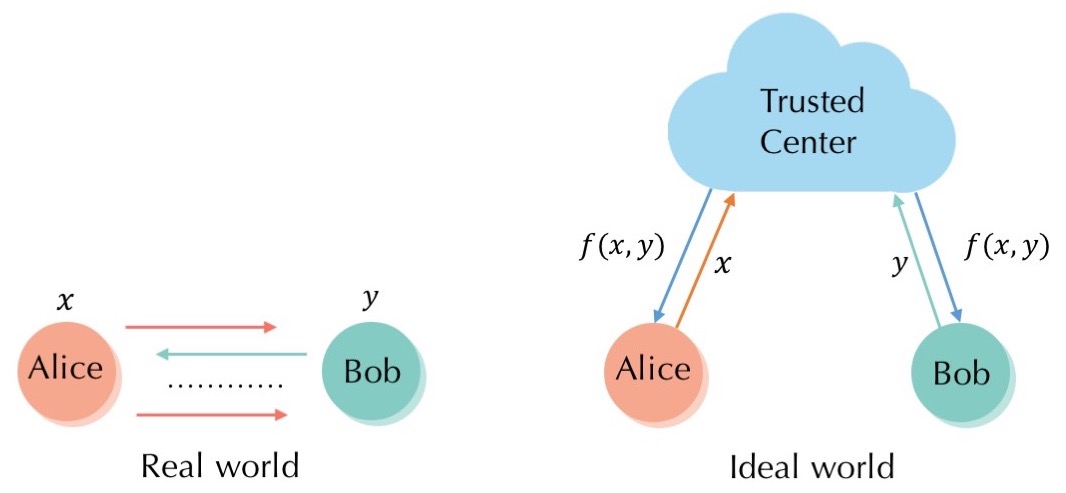}
    \caption{Real world/ideal world paradigm}
    \label{fig:my_label}
\end{figure}

With Definition~\ref{def:realideal}, a MPC protocol is secure if the scenario where parties executing the protocol is as secure as the scenario where parties has a trusted agent. The formalized notion of this is \textit{simulation}. We defer the detailed description of \textit{simulation} to appendix.\par

\begin{lemma}\label{lem:mpc}\cite{goldreich1987play}
There exists a secure multi-party computation protocol.
\end{lemma}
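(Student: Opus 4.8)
The plan is to establish the lemma by exhibiting an explicit protocol for an arbitrary efficiently computable functionality and proving it secure in the real/ideal paradigm of Definition~\ref{def:realideal}, following \cite{goldreich1987play}. The first reduction is to observe that it suffices to treat deterministic functionalities computed by Boolean circuits: any efficiently computable $f$ admits a polynomial-size circuit $C$ over the gate basis $\{\oplus, \wedge, \neg\}$, and randomized functionalities are absorbed by having the parties jointly contribute to a shared random tape. Thus the goal reduces to securely evaluating a fixed circuit $C$ on the parties' private inputs.

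First I would construct a protocol secure against \emph{semi-honest} adversaries. The key invariant is that every wire value $w$ of $C$ is held in additively secret-shared form over $\mathbb{F}_2$, i.e. party $i$ holds a random bit $w_i$ with $\bigoplus_i w_i = w$, revealing nothing about $w$ to any strict subset of parties. Inputs are shared by each owner sampling random masks, and the circuit is then swept gate by gate. Linear gates ($\oplus$ and $\neg$) are computed by purely local operations on shares and require no communication. The only interaction occurs at multiplication ($\wedge$) gates, where from shares of $a$ and $b$ the parties must obtain shares of $ab$; this is accomplished by a pairwise sub-protocol built from $1$-out-of-$4$ oblivious transfer (itself realizable from, e.g., enhanced trapdoor permutations), which converts the cross terms $a_i b_j$ into fresh random shares without either party learning the other's bit. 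After the sweep, the parties reveal their shares of the output wires and reconstruct $f$ by XOR. Security here I would argue by simulation: for any set of corrupted parties I would build a simulator that, given only their inputs and the prescribed output, fills every honest intermediate share with a uniform bit, invokes the OT simulator for each multiplication gate, and programs the honest parties' final shares to match the known output; a hybrid argument then reduces indistinguishability of the whole transcript to the security of the underlying OT.

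The second and harder stage is to upgrade from semi-honest to malicious (strategic) security via the GMW compiler. Here each party first commits, using a commitment scheme, to its input and to the randomness it will use, with that randomness fixed by a coin-tossing subprotocol so no party can bias it; thereafter every protocol message is accompanied by a zero-knowledge proof that it was computed correctly from the committed input, the committed randomness, and the transcript so far. An adversary is thereby forced either to abort or to behave exactly as some semi-honest strategy on a well-defined \emph{effective} input, which is precisely what the ideal-world simulator extracts and forwards to the trusted party of Definition~\ref{def:realideal}.

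I expect the main obstacle to be the simulation/hybrid argument in this compiled setting rather than any gate-level computation. One must show that replacing real zero-knowledge proofs by their simulator, real commitments by equivocal ones, and honest shares by random values leaves the adversary's view computationally unchanged, and that the knowledge extractor correctly recovers the effective inputs from the proofs. This is a delicate sequence of reductions to the hiding and binding of the commitment, the soundness and zero-knowledge of the proof system, and the security of the oblivious transfer, and assembling these cleanly into a single indistinguishability chain is where the real work lies.
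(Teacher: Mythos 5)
Your proposal is correct and is essentially the argument of the cited reference: the paper does not prove this lemma at all, but imports it directly from \cite{goldreich1987play}, whose content is exactly the GMW construction you sketch (circuit evaluation on XOR-shares with OT at $\wedge$-gates, then the compiler via commitments, coin-tossing, and zero-knowledge proofs). Note only that the paper ultimately needs just the semi-honest guarantee (its Appendix security analysis is $t$-privacy against honest-but-curious parties), so your first stage already suffices for its purposes, and your malicious-security stage should carry the standard caveat that with a dishonest majority one only obtains security with abort rather than guaranteed output delivery.
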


\subsection{Peer prediction}\label{sec:pp}
In the setting where information cannot be verified (e.g. subjective opinions, labels of medical images without known pathological truth), peer prediction focuses on designing reward schemes that help incentivize truthful, high quality information. In the peer prediction style information elicitation mechanisms, each information provider's reward only depends on her report and her peers' reports. 

Our SMind uses two special peer prediction mechanisms as building blocks that are proposed by \citet{2016arXiv160303151S,2016arXiv160501021K,kong2018water}. We believe that using other peer prediction mechanisms as building blocks is similar to our case. The information theoretic idea proposed by \citet{2016arXiv160501021K} can give a simple interpretation to the two special peer prediction mechanisms. At a high level, each information provider is paid the \emph{unbiased estimator} of the ``mutual information'', call it \emph{mutual information gain (MIG)}, between her reported information and her peer's reported information where the ``mutual information'' is \emph{information-monotone}---any ``data processing'' on the two random variables will decrease the ``mutual information'' between them. Thus, to get highest payment, the information provider should tell the truth, since the truthful information contains the most amount of information when applying non-truthful strategy can be seen as a ``data processing''. In our running example, a peer prediction style mechanism pays each hospital the ``mutual information'' between her labels and her peer's labels. 

We start to introduce the two special multi-task setting (the number of tasks must be $\geq 2$) peer prediction mechanisms. One is \emph{Correlation PP}, which elicits discrete signals (e.g. hard label: benign/malignant) and the other is \emph{Pearson PP}, which elicits forecasts (e.g. soft label: 70\% benign, 30\% malignant).

\paragraph{Correlation-PP/Pearson-PP}
Alice and Bob are assigned $N\geq 2$ priori similar questions (e.g. medical image labeling questions). \begin{description} 
\item[Report] for each question $i$, Alice is asked to report $r_A^{i}$ and Bob is asked to report $r_B^{i}$. We denote Alice and Bob's honest report for all questions by $\mathbf{r}_A$ and $\mathbf{r}_B$ respectively and denote their actual reports by $\reallywidehat{\mathbf{r}}_A$ and $\reallywidehat{\mathbf{r}}_B$. 

\item[Payment] Alice and Bob are rewarded by the average ``the amount of agreement'' between their reports in \emph{same} task, and punished the average ``the amount of agreement'' between their reports in \emph{distinct} tasks. When their reports are discrete signals, they are paid  
\begin{align} \label{eq:mig}
    MIG^{Corr}(\reallywidehat{\mathbf{r}}_A,\reallywidehat{\mathbf{r}}_B)=&\frac{1}{N}\sum_{i} \bigg(\mathbbm{1}(\reallywidehat{r}_A^i==\reallywidehat{r}_B^i)\bigg)\\ \nonumber
    &-\frac{1}{N(N-1)}\sum_{i\neq j}\bigg(\mathbbm{1}(\reallywidehat{r}_A^i==\reallywidehat{r}_B^j)\bigg)
\end{align}
When their reports are forecasts, given the prior distribution $\Pr[Y]$ (e.g. the apriori prediction of benign/malignant, like 90\% benign/10\% malignant), they are paid
\begin{align} \label{eq:mig}
    MIG^{Pearson}(\reallywidehat{\mathbf{r}}_A,\reallywidehat{\mathbf{r}}_B)=&\frac{1}{N}\sum_{i} 2\bigg(\sum_{y\in\Sigma}\frac{\reallywidehat{r}_A^i(y) \reallywidehat{r}_B^i(y)}{\Pr[Y=y]}-1\bigg)\\ \nonumber
    &-\frac{1}{N(N-1)}\sum_{i\neq j}\Bigg(\bigg(\sum_{y\in\Sigma}\frac{\reallywidehat{r}_A^i(y) \reallywidehat{r}_B^j(y)}{\Pr[Y=y]}\bigg)^2-1\Bigg) 
\end{align}
\end{description}

\begin{assumption}[A priori similar and random order]
    All questions have the same prior. All questions appear in a random order, independently drawn for both Alice and Bob.
\end{assumption}

Alice and Bob play a truthful strategy profile if they report $(\reallywidehat{\mathbf{r}}_A,\reallywidehat{\mathbf{r}}_B)=(\mathbf{r}_A,\mathbf{r}_B)$. Informally, Alice and Bob play a \emph{permutation strategy profile} if they both report the same permutation of their honest reports. (e.g. hard label case: label benign when it's malignant and say malignant when it's benign; soft label case: their honest forecasts are 70\% benign, 30 \% malignant and 65\% benign, 35\% malignant but they instead report  30\% benign, 70 \% malignant and 35\% benign, 65\% malignant). Here we omit the formal definition of permutation strategy profile as well as the introductions of other additional assumptions required by Correlation-PP and Pearson-PP since they are not the focus of this paper. Interested readers are referred to \citet{2016arXiv160303151S,2016arXiv160501021K,kong2018water}. We present the main property of Correlation-PP/Pearson-PP, i.e., both $ MIG^{Corr}(\reallywidehat{\mathbf{r}}_A,\reallywidehat{\mathbf{r}}_B)$ and $ MIG^{Pearson}(\reallywidehat{\mathbf{r}}_A,\reallywidehat{\mathbf{r}}_B)$ are maximized if $(\reallywidehat{\mathbf{r}}_A,\reallywidehat{\mathbf{r}}_B)=(\mathbf{r}_A,\mathbf{r}_B)$ and the maximum is always strictly positive, such that the sellers are willing to participate the game at the beginning. 

\begin{lemma}[Correlation-PP/Pearson-PP is truthful]\cite{2016arXiv160303151S,2016arXiv160501021K,kong2018water}\label{lem:corr}
With a priori similar and random order assumption, and mild conditions on the prior, Correlation-PP/Pearson-PP has truth-telling is a strict equilibrium and each agent's expected payment is strictly maximized when agents tell the truth, where the maximum is also strictly positive. Moreover, when agents play a non-permutation strategy
profile, each agent's expected payment is strictly less than truth-telling.
\end{lemma}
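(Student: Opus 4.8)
The plan is to pass from the random reward $MIG$ to its \emph{expectation} under an arbitrary strategy profile, to identify that expectation with a fixed dependence measure between the two reported signals, and then to show that this measure is maximized (uniquely, up to a common relabeling) by truth-telling. I would model Alice's and Bob's strategies as channels, i.e.\ row-stochastic matrices $A$ and $B$ acting on their honest signals; let $M$ be the joint distribution of the two honest signals on a single task, with marginals $p$ and $q$. The \emph{a priori similar and random order} assumption is exactly what makes the expectation computable: every same-task pair $(\reallywidehat{r}_A^i,\reallywidehat{r}_B^i)$ is distributed as the common joint report law $A^{\top}MB$, whereas every distinct-task pair $(\reallywidehat{r}_A^i,\reallywidehat{r}_B^j)$ with $i\neq j$ is an \emph{independent} draw from the two report marginals, since distinct tasks are i.i.d.\ and independently ordered. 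Thus the second (distinct-task) sum is precisely the bias-correction that removes the agreement attributable to chance.

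Carrying out the expectation for Correlation-PP yields the clean identity
\[
\mathbb{E}[MIG^{Corr}] \;=\; \operatorname{tr}\!\big(A^{\top}\Delta B\big) \;=\; \langle \Delta,\, AB^{\top}\rangle, \qquad \Delta := M - pq^{\top},
\]
so the expected payment is a bilinear form in the two strategies paired against the centered joint matrix $\Delta$ of the honest signals; the analogous computation for Pearson-PP (where the squared distinct-task term is what makes the estimator unbiased) produces the Pearson $\chi^2$ dependence measure, i.e.\ the $\phi$-divergence with $\phi(t)=(t-1)^2$ between the joint report law and independence. In both cases the reward is, in expectation, an information-monotone measure of dependence between the reports.

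The equilibrium claim then follows by first fixing the peer honest. With $B=I$ the form collapses to the linear objective $\sum_{a,x} A_{a,x}\Delta_{a,x}$, which is maximized by the deterministic response $a\mapsto \arg\max_x \Delta_{a,x}$. The \emph{mild conditions on the prior} are exactly the requirement that each diagonal entry $\Delta_{a,a}$ be the strict maximum of its row (a priori similar questions make the honest signals positively correlated, so the true-match entry $\Delta_{a,a}$ strictly exceeds every off-diagonal $\Delta_{a,x}$ with $x\neq a$); this forces the unique best response to be truth-telling, giving that truth-telling is a \emph{strict} equilibrium. Strict positivity of the optimum is immediate from the same structure, since at $A=B=I$ the value is $\operatorname{tr}(\Delta)=\sum_a(M_{a,a}-p_aq_a)>0$.

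The hard part is the two-sided statement: that over \emph{all} profiles the maximum of $\langle\Delta,AB^{\top}\rangle$ is attained exactly when $AB^{\top}=I$, equivalently when $A$ and $B$ form a common permutation, and is strictly smaller otherwise (note $AB^{\top}$ need not be stochastic, so this does not reduce to the one-sided argument). Here I would not optimize the bilinear form by hand but instead invoke the information-monotonicity (data-processing) theorem of \citet{2016arXiv160501021K,2016arXiv160303151S,kong2018water}, which packages precisely this fact: the relevant dependence measure cannot increase when either argument is post-processed by a channel, with equality iff the channel is an information-lossless relabeling. Feeding the prior conditions into the strict form of the inequality then yields that every \emph{non-permutation} strategy profile strictly lowers the expected payment, completing the ``moreover'' clause. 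The main obstacle, and the place where all the prior assumptions are consumed, is this equality characterization of the data-processing inequality --- ruling out every non-permutation deviation at once, not merely the invertible ones.
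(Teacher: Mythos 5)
First, a framing point: the paper does not prove this lemma at all --- it is imported verbatim from the cited works \cite{2016arXiv160303151S,2016arXiv160501021K,kong2018water}, so your attempt can only be judged on its own merits. Your expectation computation is correct and is exactly the standard Delta-matrix analysis from those works: under a priori similarity and random ordering, same-task report pairs follow $A^{\top}MB$, distinct-task pairs follow the product of report marginals, hence $\mathbb{E}[MIG^{Corr}]=\operatorname{tr}(A^{\top}\Delta B)=\langle\Delta,AB^{\top}\rangle$ with $\Delta=M-pq^{\top}$. The one-sided argument (fix $B=I$, maximize a linear objective, diagonal row-maximality of $\Delta$ forces the truthful best response) correctly yields the strict-equilibrium clause, and strict positivity of $\operatorname{tr}(\Delta)$ follows since each row of $\Delta$ sums to zero.

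The genuine gap is in the two-sided part (truth strictly maximal over \emph{all} profiles, non-permutation profiles strictly worse), and it is twofold. First, the condition you extract from the prior --- each $\Delta_{aa}$ strictly maximal in its row --- is too weak for that claim. Counterexample: with $\Sigma=\{1,2,3\}$, uniform marginals, and
\[
\Delta=\varepsilon\begin{pmatrix}0.3 & 0.1 & -0.4\\ 0.1 & 0.3 & -0.4\\ -0.4 & -0.4 & 0.8\end{pmatrix}
\]
(a valid symmetric prior for small $\varepsilon>0$), the diagonal is strictly row-maximal, yet the merging profile $f=g$ with $f(1)=f(2)=1$, $f(3)=3$ earns $\sum_{a,b\le 2}\Delta_{ab}+\Delta_{33}=1.6\varepsilon$, strictly more than $\operatorname{tr}(\Delta)=1.4\varepsilon$; so under your condition the statement you are trying to prove is simply false. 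What the cited works actually require for Correlation-PP is the categorical sign condition $\Delta_{aa}>0$ and $\Delta_{ab}<0$ for $a\neq b$, which is precisely what kills such merging gains. Second, your appeal to data processing does not attach to the quantity you computed: $\langle\Delta,AB^{\top}\rangle=\operatorname{tr}(\Delta^{\mathrm{rep}})$, where $\Delta^{\mathrm{rep}}=A^{\top}\Delta B$ is the Delta matrix of the reports, is \emph{not} itself an information-monotone dependence measure; it only satisfies $2\operatorname{tr}(\Delta^{\mathrm{rep}})\le\sum_{a,b}|\Delta^{\mathrm{rep}}_{ab}|$ (a total-variation-type mutual information, to which the DPI does apply), and this bound is tight at truth-telling only under the categorical condition. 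So the chain $\operatorname{tr}(\Delta^{\mathrm{rep}})\le\tfrac12\sum_{a,b}|\Delta^{\mathrm{rep}}_{ab}|\le\tfrac12\sum_{a,b}|\Delta_{ab}|=\operatorname{tr}(\Delta)$ closes only once that sign condition is in place --- this is where the ``mild conditions'' are really consumed, not in row-maximality. Only the Pearson-PP half of your plan works as you state it, since that payment is constructed as an unbiased estimator of a $\chi^2$-type mutual information of the reported forecasts, to which the data-processing inequality and its equality case apply directly.
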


\subsection{Smart contract}\label{sec:smartcontract}
Smart contract enforces the execution of a contract between untrusted parties. It allows credible and irreversible transactions without a trusted third party. The assurance is based on the consensus protocol of blockchain.

In the most prominent smart-contract platform Ethereum, contract codes resides on the blockchain, executed on a decentralized virtual machine Ethereum Virtual Machine (EVM). Each instruction in a smart contract is ideally executed by all miners on the chain. Transaction fees provide economic incentives for miners to execute the contract, who pack transactions into blocks and record them on chain. 

It has been shown that without further assumption it is impossible to design protocols that guarantees complete fairness in exchange procedure without a trusted third party \cite{goldreich1987play, pagnia1999impossibility, yao1986generate}. The reliability of blockchain is ensured by honest majority on chain. In our protocol, smart contract provides support for the following functionalities:

\begin{description}
    \item[Ledger] A contract with $id$ stores ledger-needed information. It runs with multiple parties, stores their balance and frozen funds in the contract. 
    \item[Freeze Funds] It should be able to freeze funds from accounts onto chain. 
    \item[Unfreeze Funds] It should be able to unfreeze funds from the previously frozen to accounts in the contract. 
\end{description}

For simplicity, we do not elaborate on the functionalities of smart contract in our protocol. Instead, it is taken as a public bulletin board with code running on it.

\section{SMind: A Trust-free, Truthful, and Secure Information Trade Protocol}\label{sec:smartmpcpp}

In this section, we introduce our Smart Info-Dealer (SMind), a trust-free, truthful, and secure protocol that elicits unverifiable information. We first focus on the information trade setting where there are one buyer vs two sellers and will extend the setting to multiple sellers later. 

We recall our example of medical image labeling here and recommend the readers use this running example as a background when they read this section: a medical company (the buyer) wants to buy labels (hard label: benign/malignant, soft label: 90\% benign) for multiple difficult medical images with \emph{unknown pathological truth} from two hospitals (the sellers). 

We start by formally introducing the information trade setting and the definition of information trade protocol (Section~\ref{sec:model}). Then we will present three protocol design goals: trust-free, truthful, and secure (see informal definitions in Section~\ref{sec:intro}).  Figure~\ref{fig:smp} shows an overview of SMind. Finally, we will present the pseudosode of SMind (Section~\ref{sec:smart}) and show that it is trust-free, truthful, and secure (Section~\ref{sec:proof}). 


\begin{figure}
  \includegraphics[width=\linewidth]{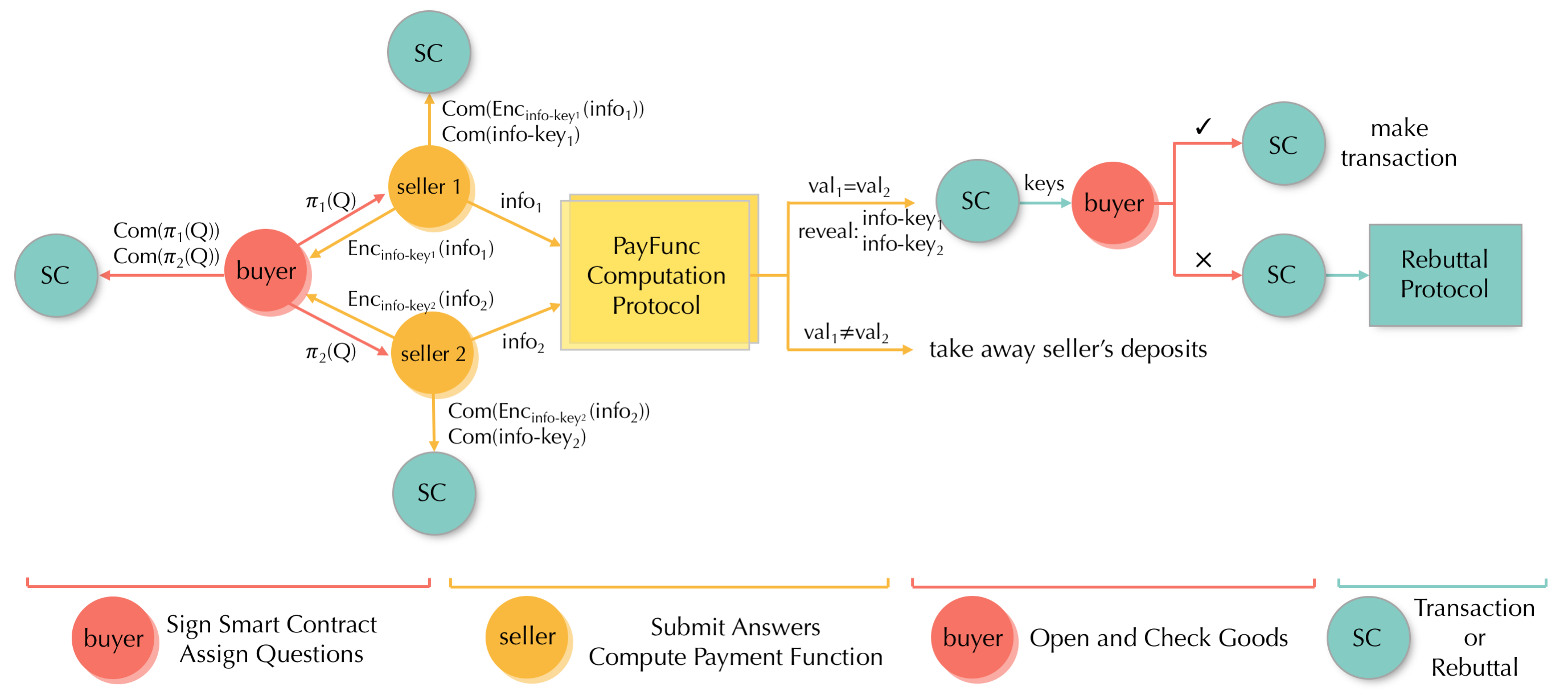}
  \caption{An illustration of SMind: A buyer (e.g. medical company) wants to buy two sellers' private opinions $\text{info}_1,\text{info}_2$ on several a priori similar tasks (e.g. soft or hard labels for several difficult medical images), with a fixed payment function $\textsc{PayFunc}(\text{info}_1,\text{info}_2)$ (e.g. a peer prediction style payment).  1) \textbf{Sign Smart Contract}: the buyers and sellers first sign the smart contract, submit the payment function, and pay their deposits (we omit this step in the figure). 2) \textbf{Assign Questions (red)}: the buyer assigns questions to the sellers privately in random order $\pi_1$ and $\pi_2$, and then commits to the question sets $\pi_i(Q)$ (\emph{each seller checks}) for possible future Rebuttal stage. 3) \textbf{Submit Answers (yellow)}: the sellers answer the questions individually. After they finish the questions, they package all information (include questions, answers) and send the encrypted info (by info-key) to the buyer and commit to info-keys and sends the commitment to smart contract. The sellers also commit to their encrypted info (\emph{buyer checks}) and send the commitment to the smart contract for possible future Rebuttal stage. 4) \textbf{Compute Payment Function (yellow)}: the sellers run MPC to compute the payment function and sends MPC's output individually to the public. 5) \textbf{Open and Check Goods (red)}: the sellers reveal their info-keys to the public. The buyer opens and checks the good. If the good is incorrect (wrong questions, or payment computation), then the buyer raises rebuttal. 6) \textbf{Transaction or Rebuttal (blue)}: if the buyer does not raise rebuttal, the smart contract makes transaction. Otherwise, the smart contract runs the rebuttal protocol with the information that is committed previously.}
  \label{fig:smp}
\end{figure}


\subsection{Model and setting}\label{sec:model}

\paragraph{Information trade setting $(\mathcal{B},\mathcal{S}_1,\mathcal{S}_2,N)$} A buyer $\mathcal{B}$ wants to buy two sellers' information (denoted as $\mathcal{S}_1,\mathcal{S}_2$), for $N$ ($N>1$) a priori similar events/questions $q_1,q_2,...,q_N$ (e.g. labeling medical images). The opinion format may be a discrete signal (e.g hard label: benign/malignant) in $\Sigma$ or a forecast (e.g. soft label: 70\% benign, 30\% malignant) in $\Delta_{\Sigma}$, where $\Delta_{\Sigma}$ is the set of all possible distributions over $\Sigma$, for the possible outcome of the event. We denote the honest private opinions of the two sellers by $(\mathbf{r}_1,\mathbf{r}_2)$ respectively and their actual reported opinions by $(\reallywidehat{\mathbf{r}}_1,\reallywidehat{\mathbf{r}}_2)$ respectively.

We call the buyer and sellers \emph{traders}. Each trader has a \emph{privacy cost}. Each seller's privacy cost $\textsc{PriCost}_{S_i}$ represents her cost when she knows her private information is revealed to other people besides the buyer who pays. The buyer's privacy cost $\textsc{PriCost}_{B}$ represents her cost when her bought information is revealed to other people (e.g.the public) besides the information owner. 

\begin{definition}[Information Trade Protocol (ITP)]\label{def:itp} Given a setting $(\mathcal{B},\mathcal{S}_1,\mathcal{S}_2,N)$, an information trade protocol is a protocol that allows the buyer $\mathcal{B}$ to buys two sellers' opinions $(\reallywidehat{\mathbf{r}}_1,\reallywidehat{\mathbf{r}}_2)$ with a fixed payment function $\textsc{PayFunc}(\reallywidehat{\mathbf{r}}_1,\reallywidehat{\mathbf{r}}_2)$. 
\end{definition}

\subsection{Protocol design goals: trust-free, truthful, and secure}\label{sec:goals}

We first give the formal definitions of trust-free and truthful ITPs. 

\begin{definition}[Trust-free ITP]
An ITP is trust-free if its execution does not need to assume that any trader is honest nor the existence of a trusted center. 
\end{definition}

In a trust-free ITP, the traders are allowed to be rational/strategic instead of required to be honest. To encourage the rational traders to behave honestly, a truth-free ITP should be additionally \emph{truthful}, which definition will be introduced now.  

Traders play \emph{truthful} strategy in ITP if they follow ITP honestly. At a high level, truthful sellers provide truthful information and truthful buyers pay the information with a fixed payment function. If there exists an equilibrium concept such that the ITP has truth-telling as the only equilibrium satisfying that equilibrium concept, we are convinced to say the traders will be encouraged to follow the ITP honestly, i.e. the ITP is \emph{truthful}. We pick \emph{strong SPE} (Definition~\ref{def:strongspe}) as the equilibrium concept. 

\begin{definition}[Truthful ITP]
An ITP is truthful if it has truth-telling as the only strong SPE.
\end{definition}
We give an informal definition of security here and will give a formal real world/ideal world style (see Definition~\ref{def:realideal}) definition in appendix. 
\begin{definition}[Secure ITP (informal)]
An ITP is secure if the information is only revealed to its owner and its buyer when traders follow the protocol. Except the output of the payment function, it's computationally infeasible for other people to obtain additional information, except a negligible probability.
\end{definition}

\subsection{SMind: description, assumptions, properties}\label{sec:smart}

We give the pseudocode of our SMind here (Table~\ref{protocol:mainbody},~\ref{protocol:checkrebuttal}). In the pseudocode, we use $\texttt{Op}(m)$ to denote the committed message $m$'s opening, but this does not mean $\texttt{Op}(m)$ depends on $m$, instead, it is chosen before generating the commitment of $m$.

\begin{table}[!htbp]
	\caption{SMind (Part I)}
	\label{protocol:mainbody}
	\begin{minipage}{\columnwidth}
		\begin{center}
			\begin{tabular*}{\textwidth}{l @{\extracolsep{\fill}}}
				\toprule
				
				\textbf{Sign Smart Contract}\\
	            Buyer and two sellers, $\mathcal{B},\mathcal{S}_1,\mathcal{S}_2$:\\
	            \qquad 1. determines and submits $\textsc{PayFunc}(\hat{\mathbf{r}}_1,\hat{\mathbf{r}}_2):=\alpha MIG(\hat{\mathbf{r}}_1,\hat{\mathbf{r}}_2)+\beta$ \footnote{The traders can pick $MIG$ as $MIG^{Pearson}$ or $MIG^{Corr}$ based on the format of the opinions (Section~\ref{sec:pp}).}\\
	            \qquad \qquad a. selects the coefficients $\alpha>0,\beta>0$ of the payment function\\ 
	            \qquad \qquad b. submits the coefficients to smart contract \\
	            \qquad \qquad c. submits the MIG function to smart contract\\
	            \qquad 2. pays their deposits $\textsc{Dep}_B, \textsc{Dep}_{S_1},\textsc{Dep}_{S_2}$ respectively to the contract\\

				\hline
				\textbf{Assign Questions}\\
				Buyer $\mathcal{B}$:\\
				\qquad 1. \textit{shuffles}\\
				\qquad\qquad a. randomly selects two permutations $\pi_1,\pi_2$ (a.k.a question order)\\
				\qquad\qquad b. generates two permuted question sets to get $\pi_1(Q), \pi_2(Q)$\\
				\qquad 2. \textit{commits}\\
				\qquad\qquad a. commits to $\pi_i(Q)$: generates randomness $\texttt{Op}(\pi_i(Q))$,\\
				\qquad \qquad generates the commitments $\texttt{Com}(\pi_i(Q)) \leftarrow Commit(\texttt{Op}(\pi_i(Q)), \pi_i(Q))$\\
				\qquad\qquad b. submits the two commitments $\texttt{Com}(\pi_1(Q)), \texttt{Com}(\pi_2(Q))$ to smart contract\\
				
				\qquad 2. \textit{waiting to be checked}\\
				\qquad \qquad sends $\pi_i(Q), \texttt{Op}(\pi_i(Q))$ to seller $i$.\\

	            Each Seller $\mathcal{S}_i$:\\
				\qquad \textit{checks the buyer's commitments}\\
				\qquad\qquad $checkbit \leftarrow Open(\texttt{Com}(\pi_i(Q)), \texttt{Op}(\pi_i(Q)), \pi_i(Q))$\\
				\qquad \qquad \texttt{if} $checkbit$ is $valid$, then goto \textbf{Submit Answers}\\
				\qquad \qquad \texttt{else} \textbf{terminates}\\

				\hline
				\textbf{Submit Answers}\\
				Each seller $\mathcal{S}_i$:\\
				\qquad 1. \textit{packages and encrypts}\\
				\qquad \qquad a. packages her ($\pi_i(Q)$, answers) as her $\text{info}_i$\\
				\qquad \qquad b. generates $\text{info-key}_i$, encrypts $\text{info}_i$ with info-key to get $\enc_{\text{info-key}_1}(\text{info}_i)$\\
				\qquad 2. \textit{commits}\\
				\qquad \qquad a. commits to $\enc_{\text{info-key}_i}(\text{info}_i)$:\\
                \qquad \qquad \qquad generates randomness $\texttt{Op}(\enc_{\text{info-key}_i}(\text{info}_i))$\\
				\qquad \qquad \qquad generates the commitment \\
				\qquad \qquad \qquad
				$\texttt{Com}(\enc_{\text{info-key}_i}(\text{info}_i)) \leftarrow
				Commit (\texttt{Op}(\enc_{\text{info-key}_i}(\text{info}_i)), \enc_{\text{info-key}_i}(\text{info}_i))$\\
				
				\qquad \qquad b. commits to $\text{info-key}_i$:\\
				\qquad \qquad \qquad generates randomness $\texttt{Op}(\text{info-key}_i)$\\
				\qquad \qquad \qquad generates the commitment $\texttt{Com}(\text{info-key}_i) \leftarrow Commit(\texttt{Op}(\text{info-key}_i),\text{info-key}_i)$\\
				\qquad \qquad c. submits $\texttt{Com}(\enc_{\text{info-key}_i}(\text{info}_i))$ and  $\texttt{Com}(\text{info-key}_i)$ to smart contract\\

				\qquad 3. \textit{waiting to be checked}\\
				\qquad \qquad sends $\enc_{\text{info-key}_i}(\text{info}_i)$ and 
				$\texttt{Op}(\enc_{\text{info-key}_i}(\text{info}_i))$ to buyer\\
				
				Buyer $\mathcal{B}$:\\
				\qquad \textit{checks the sellers' commitments}\\
				\qquad \qquad  $checkbit \leftarrow Open(\texttt{Com}(\enc_{\text{info-key}_i}(\text{info}_i)), \texttt{Op}(\enc_{\text{info-key}_i}(\text{info}_i)),\enc_{\text{info-key}_i}(\text{info}_i))$\\
				\qquad\qquad \texttt{if} $checkbit$ is $valid$, then goto \textbf{Compute Payment Function}\\
				\qquad\qquad \texttt{else} \textbf{terminates}\\
				%
				

				\bottomrule
			\end{tabular*}
		\end{center}
		\bigskip\centering
	\end{minipage}

\end{table}

\begin{table}[!htbp]
	\caption{SMind (Part II)}
	\label{protocol:checkrebuttal}
	\begin{minipage}{\columnwidth}
		\begin{center}
			\begin{tabular*}{\textwidth}{l @{\extracolsep{\fill}}}
				\toprule
			
				\textbf{Compute Payment Function}\\
				Two sellers $\mathcal{S}_1,\mathcal{S}_2$:\\
				\qquad 1. run MPC sub-protocol to compute the $\textsc{PayFunc}(\hat{\mathbf{r}}_1,\hat{\mathbf{r}}_2)$\\
				\qquad 2. submit output individually to smart contract\\
					\hline
				\textbf{Open and Check Goods}\\
				Each seller $\mathcal{S}_i$: \\
				\qquad submits $\text{info-key}_i$ and  $\texttt{Op}(\text{info-key}_i)$ to smart contract\\
				
				Buyer $\mathcal{B}$:\\
				\qquad \textit{checks}:\\
				\qquad \qquad use $\text{info-key}_i$ to decrypt $\enc_{\text{info-key}_i}(\text{info}_i)$\\
				\qquad \qquad \texttt{if} fail in decryption \\
				\qquad \qquad \qquad \texttt{or} questions in info$_i$ does not match $\pi_i(Q)$\\
				\qquad \qquad \qquad  \texttt{or} \textsc{PayFunc} is inconsistent as reported 
				\\
				\qquad \qquad \qquad raises \textbf{Rebuttal}\\

				Smart contract $\mathcal{SC}$:\\
				\qquad 1. \textit{waits for submission}\\
				\qquad \qquad \texttt{if} receives $\text{info-key}_i$s and consistent MPC outputs before time-out\\
				\qquad \qquad \qquad goto next step\\
				\qquad \qquad \texttt{else}\\
				\qquad \qquad \qquad confises all sellers' deposits, return buyer's deposits and \textbf{terminates}\\


				\qquad 2. \textit{checks}\\
				\qquad \qquad $checkbit\leftarrow Open(\texttt{Com}(\text{info-key}_i), \texttt{Op}(\text{info-key}_i), \text{info-key}_i)$\\
				\qquad \qquad \texttt{if} $checkbit$ is $invalid$\\
				\qquad \qquad \qquad  all deposits goes to buyer, \textbf{terminates}\\
			
				\qquad 3. \textit{makes transaction}\\
				\qquad\qquad \texttt{if} not receives \textbf{Rebuttal} (a.k.a time-out)\\
				\qquad\qquad\qquad make transactions and \textbf{terminates}\\

				\hline
				\textbf{Rebuttal}\\
				Buyer $\mathcal{B}$:\\
				\qquad submits $\enc_{\text{info-key}_i}(\text{info}_i)$,  $\texttt{Op}(\enc_{\text{info-key}_i}(\text{info}_i))$;
				 $\pi_i(Q)$ and $ \texttt{Op}(\pi_i(Q)), i=1, 2$\\
				
                Smart contract $\mathcal{SC}$:\\
                \qquad \texttt{if} not receives buyer's submissions (a.k.a. time-out)\\
                \qquad\qquad goto \textbf{rebuttal failure}\\
                
                \qquad $checkbit_1\leftarrow Open(\texttt{Com}(\enc_{\text{info-key}_i}(\text{info}_i)), \texttt{Op}(\enc_{\text{info-key}_i}(\text{info}_i)), \enc_{\text{info-key}_i}(\text{info}_i))$\\
                \qquad $ checkbit_2\leftarrow Open(\texttt{Com}(\pi_i(Q)), \texttt{Op}(\pi_i(Q)), \pi_i(Q))$\\
                \qquad \texttt{if} $checkbit_1$ or $checkbit_2$ is $invalid$, then goto \textbf{rebuttal failure}\\
                \qquad  \texttt{if} fail in decryption \\
				\qquad \qquad \texttt{or} questions in info$_i$ does not match $\pi_i(Q)$\\
				\qquad  \qquad  \texttt{or} \textsc{PayFunc} is inconsistent as reported \\
				\qquad \qquad goto \textbf{rebuttal success}
                
			

			    \\
			    \qquad \textbf{rebuttal failure}:
			     gives all deposits evenly to the sellers except the contract cost\\
			    \qquad \textbf{rebuttal success}: 
			     gives all deposits to the buyer except the contract cost \\
				\bottomrule
			\end{tabular*}
		\end{center}
		\bigskip\centering
	\end{minipage}

\end{table}

We present several reasonable assumptions of our main theorem.

\begin{assumption}\label{assume:id}
Initially, sellers do not know each other's identity.
\end{assumption}

This assumption guarantees that the sellers cannot privately communicate with each other \emph{before} the Compute Payment Function stage. We require this assumption to guarantee the truthful property of the peer prediction building block of SMind. Since without this assumption, 1) the sellers will play an \emph{order collusion} (e.g. answer yes/no for the questions with even/odd index) to get higher payments; 2) although permutation (e.g. label benign when it is malignant, label malignant when it is benign) cannot bring the sellers strictly higher payments (Lemma~\ref{lem:corr}), a permutation strategy profile is much less risky when the above assumption does not hold.   

The above assumption still allows the sellers to privately communicate with each other in the Compute Payment Function stage, since before this stage, the protocol has already asked the sellers to commit several necessary information securely for future possible Rebuttal stage.

\begin{assumption}\label{assume:notransfer}
Traders cannot transfer money after the protocol, aided by a trusted judge outside the protocol.
\end{assumption}

It may sound possible that the buyer can collude with one seller to cheat for all deposits and divide them evenly after the protocol. However, it implicitly requires a trusted judge to execute this, otherwise buyers will take all money and refuse to give her accomplice. 

To encourage the sellers to run MPC rather than calculate the payment in a non-private way (e.g. seller 1 sends her private information to seller 2 and seller 2 finishes all computations) in the Compute Payment Function stage, we need the following assumption. 
\begin{assumption}\label{assume:pri}
Both sellers have privacy costs that are greater than the cost of running MPC.
\end{assumption}

However, this assumption is not necessary if we do not care the computation method the sellers use, since all other parts (e.g. payment submission) in SMind are still truthful and secure without this assumption. 





\begin{theorem}[Main Theorem]
With assumption~\ref{assume:id},~\ref{assume:notransfer},~\ref{assume:pri}, there exists proper deposits such that SMind is trust-free, truthful, and secure. 
\end{theorem}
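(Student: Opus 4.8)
The plan is to prove the three properties separately, since they are logically independent guarantees about the same protocol. \emph{Trust-free} is essentially immediate from the construction: no step of SMind (Tables~\ref{protocol:mainbody},~\ref{protocol:checkrebuttal}) ever assumes that a trader acts honestly, and the only semi-trusted object, the smart contract, is realized on-chain and so inherits decentralized trust from the honest-majority consensus rather than from any single center. I would only need to check that every branch of the protocol terminates with a well-defined allocation of deposits using information that has already been committed on-chain, so that no external arbiter is ever invoked. The real work is in \emph{security} and \emph{truthfulness}.

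For security I would argue in the real world/ideal world paradigm (Definition~\ref{def:realideal}) by exhibiting a simulator for the view of any party or coalition that is given only the output of \textsc{PayFunc}. The construction composes three hiding guarantees in sequence: (i) every value a party publishes on-chain before the Open stage is a \emph{commitment}, whose hiding property lets the simulator post commitments to dummy values; (ii) each seller's $\text{info}_i$ is sent to the buyer only in encrypted form $\enc_{\text{info-key}_i}(\text{info}_i)$, so semantic security of the encryption hides it until the key is opened; and (iii) the Compute Payment Function stage is run through MPC, so by Lemma~\ref{lem:mpc} the transcript of that subprotocol can be simulated from the output alone. The key structural observation is that the ciphertext is delivered privately to the buyer while only $\text{info-key}_i$ is posted publicly, so the public holds a key but never the matching ciphertext; hence only the buyer, who holds both, can decrypt, and only after the transaction is settled. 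Assumption~\ref{assume:pri} is what guarantees the sellers actually choose MPC over a non-private computation, so the simulation in (iii) applies on the equilibrium path.

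For truthfulness I would show that truth-telling is the unique strong SPE (Definition~\ref{def:strongspe}) in two stages. First, SPE by backward induction: starting from the Transaction/Rebuttal leaves, the smart contract mechanically enforces the deposit transfers, so I would verify that with \emph{proper deposits} (chosen large enough that any one-shot punishment strictly dominates the largest obtainable gain) the unique best response at each node is honest play --- a seller who opens an inconsistent key or commitment forfeits her deposit via the $Open$ check, a buyer who raises a false rebuttal loses his deposit at \textbf{rebuttal failure}, and a buyer who suppresses a valid rebuttal overpays relative to raising it. Propagating these facts back through Open, Compute Payment, Submit Answers, and Assign Questions, and invoking Lemma~\ref{lem:corr} to conclude that $\textsc{PayFunc}=\alpha\,MIG+\beta$ with $\alpha>0$ is strictly maximized by truthful reporting, pins down honest play as the unique subgame-perfect continuation. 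Second, strong NE: I would enumerate the possible coalitions and defeat each using one assumption --- Assumption~\ref{assume:id} prevents the two sellers from agreeing on an order collusion before the MPC stage and makes a permutation (which by Lemma~\ref{lem:corr} yields no strict gain anyway) unattractive, while Assumption~\ref{assume:notransfer} defeats any buyer--seller coalition that would cheat for the pooled deposits, since without an outside trusted judge the promised split is not enforceable and the deviation fails to benefit the seller.

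I expect the truthfulness argument, specifically the strong-SPE analysis, to be the main obstacle, for two reasons. The first is calibration of the deposits: each leaf of the game imposes a different constraint relating a deviation gain to a confiscated deposit, and I must exhibit a single deposit profile $(\textsc{Dep}_B,\textsc{Dep}_{S_1},\textsc{Dep}_{S_2})$ simultaneously satisfying all of them, using the boundedness of \textsc{PayFunc} implicit in Lemma~\ref{lem:corr}. The second, subtler, difficulty is bridging the cryptographic and game-theoretic layers: a rational deviator may try to send a malformed ciphertext, a wrong key, or a corrupted MPC message, and I must argue that the binding property of the commitments together with the Rebuttal protocol detects every such deviation on-chain and routes the deposits so that the deviation is strictly unprofitable, thereby collapsing the enormous strategy space back onto the honest path before the peer-prediction incentive argument of Lemma~\ref{lem:corr} can be applied.
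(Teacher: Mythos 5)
Your overall decomposition (trust-free from the construction, security via a simulator composing commitment hiding, encryption security, and Lemma~\ref{lem:mpc}, truthfulness via backward induction plus a strong-NE check) matches the paper's, and the trust-free and security parts track the paper's argument closely. The gap is in the truthfulness part, specifically your claim that backward induction already ``pins down honest play as the unique subgame-perfect continuation,'' with strong NE then invoked only to defeat collusive coalitions (order collusion, buyer--seller deposit splitting). That uniqueness claim fails for SMind, and the paper says so explicitly: the Compute Payment Function stage is a coordination game. The smart contract confiscates \emph{all} sellers' deposits unless the two submitted MPC outputs agree, so if both sellers report the same wrong value $val_1=val_2\neq\textsc{PayFunc}(\reallywidehat{\mathbf{r}}_1,\reallywidehat{\mathbf{r}}_2)$, no \emph{unilateral} deviation helps: a seller who alone switches to the correct value makes the outputs inconsistent and loses her deposit anyway, while staying with the wrong value loses it through the buyer's successful rebuttal. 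Hence ``both report the same wrong value'' is a Nash equilibrium of that subgame and extends to a ``bad'' SPE of the whole game; backward induction gives existence of the truthful SPE, not uniqueness.

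The paper's resolution is that these bad SPEs are not strong Nash equilibria: the two sellers, deviating \emph{jointly}, can switch to computing and reporting the true payment, which saves both deposits and strictly benefits both of them; this is what makes truth-telling the unique \emph{strong} SPE. So the strong-NE half of Definition~\ref{def:strongspe} is doing equilibrium selection against coordination failures, not (as in your plan) merely ruling out profitable cheating coalitions --- the collusions you enumerate are already handled elsewhere (Assumption~\ref{assume:id} for order collusion at answer time, Assumption~\ref{assume:notransfer} for buyer--seller transfers, and Lemma~\ref{lem:corr} making permutation strategies unprofitable). Without the joint-deviation argument in the Compute Payment Function stage, your proof establishes at best that truth-telling is \emph{a} strong SPE, and the theorem's truthfulness claim --- that it is the \emph{only} one --- would not follow.
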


\section{Proof of Main Theorem} \label{sec:proof}

We recommend the readers to use Figure~\ref{fig:smp} as a reference when reading the proof. The trust-free property of SMind follows from its description. We show the truthful property and the secure property independently. 

To show SMind is truthful, i.e. has truth-telling as the unique strong SPE, we use backward induction procedure (Section~\ref{sec:gametheory}) and start from the last stage. We will firstly show that when deposits are large enough, the rational buyer will raise rebuttal as the protocol states. We then prove that the sellers' optimal strategy is to compute payment function for their committed answers and report $val_1=val_2=\textsc{PayFunc}(\reallywidehat{\mathbf{r}}_1,\reallywidehat{\mathbf{r}}_2)$ accordingly. Then we show that it is optimal for sellers to package and hash their honest answers $(\mathbf{r}_1,\mathbf{r}_2)$, due to the truthful property of the peer prediction style payment functions. Finally to the first stage when questions are assigned, we show that the rational buyer will follow the protocol honestly.

The security of SMind is based on security assumptions of its cryptographic building blocks including encryption, hash, commitment scheme and MPC. We only provide an intuition of proof here. For formal proof of security, readers can refer to Appendix \ref{sec:security}.

\subsection{Truthfulness proof: game theoretic analysis}

We start to show that with proper deposits, SMind has truth-telling as the unique strong SPE. We first list the possible costs in SMind: contract cost: \textsc{ConCost}/trader, additional rebuttal cost: \textsc{RebCost}, privacy cost: $\textsc{PriCost}_{trader}$, MPC cost: \textsc{MPCCost}/seller, (lower-bound) cost of attacking cryptographic building blocks: \textsc{AttackCost}. Note that the \textsc{AttackCost} is too large for a trader to ever try on a real attack. 


We identify SPE via a backward induction procedure: start from the last step, transactions or rebuttal. 

\subsubsection{Transactions or Rebuttal}

We first show that when the deposits are sufficiently large, the buyer's optimal strategy is to follow the protocol in the Open and Check Goods stage.

\begin{definition}[Incorrect good]\label{def:incorrect}
The information good is incorrect $\times$ if either of the following situations is true:
\begin{description}
\item[info-keys $\times$]: the revealed info-keys fail to open the encrypted info
\item[questions $\times$]: the questions set in the opened info is inconsistent with the committed questions set
\item[payment computation $\times$]: the payment of the opened info is inconsistent with the value two sellers submitted.
\end{description}
\end{definition}

\begin{lemma}[Optimal strategy in transaction or rebuttal: $\times \rightarrow \text{rebuttal}, \surd\rightarrow \text{no rebuttal}$]\label{lem:rebuttal}
There exists proper deposits, in detail, 
$$  \textsc{Dep}_{S_1}+\textsc{Dep}_{S_2}> \textsc{RebCost}+\textsc{PriCost}_B-val + 2 \textsc{ConCost}$$
and
$$  \textsc{Dep}_{B}>val + \textsc{ConCost}-\textsc{PriCost}_B \qquad \textsc{AttackCost} > \textsc{Dep}_{S_1}+\textsc{Dep}_{S_2}+\textsc{Dep}_{B} $$
such that after buyer opens and checks the information, it's optimal for the buyer to raise rebuttal when the good is incorrect and not raise rebuttal when the good is correct. 
\end{lemma}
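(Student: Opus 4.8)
The plan is to treat this lemma as the base case of the backward induction and to reduce the buyer's decision to a comparison of her payoff under the two feasible actions, ``raise rebuttal'' versus ``do not raise rebuttal'', in each of the two possible states of the good, correct ($\surd$) and incorrect ($\times$). First I would argue that the smart contract's rebuttal verdict is a deterministic, tamper-proof function of the true state of the good. By the time this stage is reached the buyer already holds every opening she needs to run a rebuttal: she generated $\pi_i(Q),\texttt{Op}(\pi_i(Q))$ herself in Assign Questions and received $\enc_{\text{info-key}_i}(\text{info}_i)$ together with $\texttt{Op}(\enc_{\text{info-key}_i}(\text{info}_i))$, already verified, in Submit Answers. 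Hence raising and then aborting a rebuttal only triggers the time-out branch (rebuttal failure) and is weakly dominated, so ``raise rebuttal'' may be identified with completing it. The only way the verdict could deviate from the honest re-evaluation would be to break the binding of a commitment; since the whole pot a trader can ever win is at most $\textsc{Dep}_{S_1}+\textsc{Dep}_{S_2}+\textsc{Dep}_B<\textsc{AttackCost}$, no attack is profitable, so I may treat the commitments as perfectly binding. Consequently an incorrect good deterministically yields rebuttal success and a correct good yields rebuttal failure, whatever the buyer claims.

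Next I would tabulate the buyer's net payoff in each branch, using two observations that make the information's value drop out and isolate the privacy cost. The buyer ends up holding the same information whether or not she rebuts (she either decrypts it or posts it herself), so its value cancels in every comparison; and the privacy cost $\textsc{PriCost}_B$ is realized exactly when she rebuts, because only then is $\enc_{\text{info-key}_i}(\text{info}_i)$ posted on chain and thus decryptable with the already-public info-keys. This gives four payoffs: following the protocol on a correct good leads to a transaction worth $-val-\textsc{ConCost}$, whereas rebutting a correct good loses her whole deposit and her privacy, worth $-\textsc{Dep}_B-\textsc{PriCost}_B-\textsc{RebCost}$; on an incorrect good, rebutting awards her the sellers' deposits, worth $\textsc{Dep}_{S_1}+\textsc{Dep}_{S_2}-\textsc{RebCost}-\textsc{PriCost}_B-2\textsc{ConCost}$, while not rebutting forces her to pay for garbage at $-val-\textsc{ConCost}$.

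Finally I would impose optimality in each state. Requiring no-rebuttal to beat rebuttal on a correct good gives $\textsc{Dep}_B+\textsc{PriCost}_B+\textsc{RebCost}>val+\textsc{ConCost}$, for which the stated bound $\textsc{Dep}_B>val+\textsc{ConCost}-\textsc{PriCost}_B$ is a (slightly loose) sufficient condition; requiring rebuttal to beat no-rebuttal on an incorrect good yields the first stated condition $\textsc{Dep}_{S_1}+\textsc{Dep}_{S_2}>\textsc{RebCost}+\textsc{PriCost}_B-val+2\textsc{ConCost}$, the coefficient of $\textsc{ConCost}$ reflecting the extra on-chain operations the rebuttal branch charges the buyer. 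These two lower bounds on the deposits are simultaneously satisfiable, and since $\textsc{AttackCost}$ is assumed astronomically large, the upper bound $\textsc{Dep}_{S_1}+\textsc{Dep}_{S_2}+\textsc{Dep}_B<\textsc{AttackCost}$ leaves ample room; hence proper deposits exist.

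The main obstacle I anticipate is not any single inequality but the careful cost accounting: charging $\textsc{PriCost}_B$ only to the rebuttal branches, confirming that the information's value truly cancels across the two actions, and matching the exact $\textsc{ConCost}$ coefficients to the contract's fee schedule. The conceptually delicate point is the tamper-proofness argument of the first step, where the binding of the commitments together with $\textsc{AttackCost}>\textsc{Dep}_{S_1}+\textsc{Dep}_{S_2}+\textsc{Dep}_B$ is exactly what prevents the buyer from fabricating a successful rebuttal on a correct good or from being cheated out of a deserved one; everything downstream is then a finite comparison of four numbers.
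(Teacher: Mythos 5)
Your proposal is correct and follows essentially the same route as the paper: a tamper-proofness claim that the rebuttal verdict is pinned to the true state of the good by commitment binding plus $\textsc{AttackCost} > \textsc{Dep}_{S_1}+\textsc{Dep}_{S_2}+\textsc{Dep}_B$, followed by a four-cell payoff comparison (rebuttal vs.\ no rebuttal, correct vs.\ incorrect good) that yields the stated deposit conditions. Your bookkeeping differs only in inessential constants (the paper charges $3\textsc{ConCost}$ in the incorrect-good/rebuttal cell and omits $\textsc{RebCost}$ in the correct-good/rebuttal cell), and these discrepancies leave the stated deposit bounds sufficient, so the argument goes through unchanged.
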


\begin{proof}
We first claim that that the smart contract is able to open and check the goods with previously committed information, unless the buyer breaks the binding property of the commitment scheme. 
\begin{claim}\label{claim:rebuttal}
In the Rebuttal stage, 1) if the good is incorrect, the buyer will win; 2) if the good is correct, the buyer will lose unless she spends \textsc{AttackCost} to break the binding property of the commitment scheme. 
\end{claim}
Before the Rebuttal stage, the encrypted information, the key, the questions are all committed to the public (Figure~\ref{fig:smp}). If the good is incorrect, the buyer can share her view with the public by submitting the truthful encrypted information, such that the public can also know the good is incorrect. If the good is correct, the buyer will lose unless she breaks the binding property of the commitment scheme and submits a fake encrypted information. Thus, the above claim is valid. 

We start to show that the buyer's optimal strategy is $\times \rightarrow \text{rebuttal}, \surd\rightarrow \text{no rebuttal}$ via the following utility table, Table~\ref{tab:buyer}.

No rebuttal will always i) transfer $val$ from the buyer to the seller, ii) take the contract cost from the buyer.

If the good is wrong, rebuttal will i) return the buyer her own deposit except the contract cost, ii) bring the buyer all sellers' deposits except the contract costs, iii) take the buyer the rebuttal cost and her privacy cost. 

If the good is correct, if the buyer does not attack the commitment scheme, the rebuttal will i) take the buyer's deposit and ii) her privacy cost; otherwise the buyer will obtain the rebuttal benefits but lose the large attack cost.

\begin{table}[!htbp]%
	\caption{To rebuttal or not rebuttal?}
	\label{tab:buyer}
	\begin{minipage}{\columnwidth}
		\begin{center}
			\begin{tabular}{c|c|c}
				\toprule
				 & Rebuttal & no Rebuttal\\
								\hline
				$\times$ & $RB:=\textsc{Dep}_{S_1}+\textsc{Dep}_{S_2}-\textsc{RebCost}-\textsc{PriCost}_B-3\textsc{ConCost}$& $-val-\textsc{ConCost}$\\
				\hline
				 $\surd$ & $-\textsc{Dep}_B-\textsc{PriCost}_B$ or $\leq -\textsc{AttackCost}+\text{RB}$ & $-val-\textsc{ConCost}$ \\
				\bottomrule
			\end{tabular}
		\end{center}
		\bigskip\centering
	\end{minipage}
\end{table}%
The above table implies a proper deposits exist for the claim since the attack cost is very large. 
\end{proof}


%

\subsubsection{Compute Payment Function}

We move backward to Compute Payment Function stage and show that there exists proper deposits such that it is optimal for the sellers to honestly report $\textsc{PayFunc}(\hat{\mathbf{r}}_1,\hat{\mathbf{r}}_2)$, given that $(\hat{\mathbf{r}}_1,\hat{\mathbf{r}}_2)$ are the answers the sellers committed in the previous Submit Answers stage.

\begin{lemma}[Optimal strategy in Compute Payment Function stage: report $\textsc{PayFunc}(\hat{\mathbf{r}}_1,\hat{\mathbf{r}}_2)$]\label{lem:compute}
	Given that the buyer plays rationally in Transaction or Rebuttal stage, there exists proper deposits such that, it is optimal for both of the sellers to report $val_1=val_2=\textsc{PayFunc}(\hat{\mathbf{r}}_1,\hat{\mathbf{r}}_2)$ and to reveal keys honestly, given that $(\hat{\mathbf{r}}_1,\hat{\mathbf{r}}_2)$ are the answers the sellers committed in the previous Submit Answers stage.

	
	Moreover, if both the sellers' privacy costs are greater than the MPC cost, i.e., $$\textsc{PriCost}_{S_i}>\textsc{MPCCost},i=1,2$$ it's optimal for the sellers to run MPC to calculate the payment function. 
\end{lemma}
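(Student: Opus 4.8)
The plan is to continue the backward induction into the subgame beginning at \textbf{Compute Payment Function}, taking as given (by hypothesis) that the committed ciphertext $\enc_{\text{info-key}_i}(\text{info}_i)$, the committed key $\text{info-key}_i$, and hence the committed answers $(\hat{\mathbf r}_1,\hat{\mathbf r}_2)$ are fixed, and that in the final stage the buyer responds as in Lemma~\ref{lem:rebuttal}: she raises rebuttal exactly when the good is incorrect in the sense of Definition~\ref{def:incorrect}. The decisive observation is that the committed data already \emph{pin down} a single number $val:=\textsc{PayFunc}(\hat{\mathbf r}_1,\hat{\mathbf r}_2)$: by the binding property of the commitment scheme, the only plaintext the sellers can later open is the one determined by the committed ciphertext and key, so the payment of the opened good is forced to equal $val$ unless some seller spends \textsc{AttackCost} to break binding. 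Hence the sellers' only real freedom in this subgame is (i) which value(s) $val_1,val_2$ to submit, (ii) whether to reveal the committed keys, and (iii) how to carry out the joint computation.

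First I would dispose of the reporting and key-revelation choices by a short case analysis, showing each deviation is caught and therefore dominated. If the two submitted values disagree, the smart contract's consistency check in \textbf{Open and Check Goods} confiscates both sellers' deposits and terminates. If they agree on some $val_1=val_2=v\ne val$, then the opened good triggers \emph{payment computation} $\times$, so by Lemma~\ref{lem:rebuttal} the buyer rebuts and, by Claim~\ref{claim:rebuttal}, wins, sending both deposits to the buyer; the only escape, reporting $v=val$ for a \emph{different} opened good, again requires breaking binding at cost \textsc{AttackCost}. Refusing to reveal a key, or revealing one inconsistent with its commitment, likewise leads either to confiscation (no valid key before time-out) or to \emph{info-keys} $\times$ and a successful buyer rebuttal. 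In every caught branch the seller loses $\textsc{Dep}_{S_i}$ (and, on a rebuttal, additionally suffers $\textsc{PriCost}_{S_i}$ because the buyer reposts the encrypted info, so that sabotage yields no privacy saving), whereas honest play yields the payment $val$ net of \textsc{ConCost} (and later \textsc{MPCCost}). Choosing the deposits so that, say, $\textsc{Dep}_{S_i}+\textsc{PriCost}_{S_i}+val>\textsc{MPCCost}$ makes honest reporting and honest key revelation strictly optimal, and this lower bound is trivially compatible with the range already required by Lemma~\ref{lem:rebuttal}.

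Because a strong SPE must also be a strong NE, I would next verify that the two sellers gain nothing by deviating \emph{jointly}. The point is that the buyer's verification in Lemma~\ref{lem:rebuttal} compares the opened good's payment against the submitted value and is insensitive to any coordination between the sellers: any coalition report that differs from the payment of the (binding-fixed) committed good is still caught, and feeding false inputs into the joint computation only produces an output that mismatches the opened good and is caught in the same way. Thus no coalition report strictly dominates honest reporting for both members simultaneously, so the honest profile survives the strong-NE refinement in this subgame.

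Finally, for the ``moreover'' claim I would fix the now-optimal behaviour of reporting the correct $val$ and compare the ways of computing it. Since $val$ depends on both private inputs, the sellers must run some joint computation. By Lemma~\ref{lem:mpc} an MPC protocol computes $val$ while leaking nothing beyond the output, at cost \textsc{MPCCost} per seller; any non-private alternative (e.g.\ one seller forwarding her raw $\hat{\mathbf r}_i$ to the other) discloses that seller's information and so incurs $\textsc{PriCost}_{S_i}$. Under Assumption~\ref{assume:pri}, i.e.\ $\textsc{PriCost}_{S_i}>\textsc{MPCCost}$, running MPC is therefore the cheapest correct computation and is optimal. I expect the main obstacle to be the first step: rigorously tying the reported value to the committed good through binding and the buyer's verification, so that no seller (or coalition) can report a mismatching value without either tripping the contract's consistency check or losing the induced rebuttal. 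Once this lock is established, the deposit comparison and the MPC cost comparison are routine.
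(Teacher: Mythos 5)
Your proposal is correct and follows essentially the same route as the paper: commitment binding pins down the value of the committed good, any misreport or key misbehaviour is punished either by the contract's consistency check (confiscation) or by a winning buyer rebuttal under Lemma~\ref{lem:rebuttal}, breaking binding costs \textsc{AttackCost}, and Assumption~\ref{assume:pri} makes MPC the cheapest correct way to compute $val$. Your write-up is simply more explicit than the paper's two-sentence argument --- in particular, your coalition check (ruling out the sellers jointly agreeing on a wrong value) is the point the paper defers to the closing paragraph of its truthfulness proof, where such profiles are dismissed as SPEs that fail to be strong NE.
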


\begin{proof}
We first note that although the sellers can communicate with each other in this stage, they have no choice other than to compute the value of their committed data, report the value and reveal the keys honestly, otherwise they will lose either 1) large attack cost (much larger than the highest payment they can obtain in SMind) for breaking the commitment scheme, or 2) all their deposits in rebuttal, given that the buyer plays rationally in Transaction or Rebuttal stage. 


Moreover, if both the sellers' privacy costs are greater than the MPC cost, i.e., $\textsc{PriCost}_{S_i}>\textsc{MPCCost},i=1,2$, then it's optimal for the sellers to run MPC to calculate the payment function, otherwise although they may save the MPC cost, at least one of them will lose her privacy cost. 
\end{proof}

\subsubsection{Submit Answers}

We move back to the Submit Answers stage and will show that it's optimal for the sellers to package $\mathbf{r}_1,\mathbf{r}_2$ as their answers and encrypt and report the hash honestly. 

\begin{lemma}[Optimal strategy in Submit Answers stage: package $\mathbf{r}_1,\mathbf{r}_2$ as answers]\label{lem:finish}
In the Submit Answers stage, given that all traders will play rationally in the following stages, it is optimal for both of the sellers to 1) package $\mathbf{r}_1,\mathbf{r}_2$ as their answers and 2) encrypt all info (include questions set and answers) honestly; 3) commit the encrypted info honestly. It's also optimal for the buyer to check the sellers' commitments honestly. 
\end{lemma}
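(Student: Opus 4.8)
The plan is to reduce the Submit Answers subgame to the peer prediction game of Lemma~\ref{lem:corr} and then read off truth-telling from there. First I would invoke backward induction together with the two preceding lemmas to pin down all downstream play as a function of what the sellers commit here: by Lemma~\ref{lem:compute} the sellers will later report $val_1=val_2=\textsc{PayFunc}(\hat{\mathbf{r}}_1,\hat{\mathbf{r}}_2)$ for whatever answers $(\hat{\mathbf{r}}_1,\hat{\mathbf{r}}_2)$ they commit, and by Lemma~\ref{lem:rebuttal} the buyer rebuts if and only if the good is incorrect. Hence, conditional on the committed good being correct, the contract makes the transaction and each seller nets $\textsc{PayFunc}(\hat{\mathbf{r}}_1,\hat{\mathbf{r}}_2)=\alpha\, MIG(\hat{\mathbf{r}}_1,\hat{\mathbf{r}}_2)+\beta$ minus fixed costs (returned deposit, \textsc{ConCost}, and possibly \textsc{MPCCost}) that do not depend on the content of the answers. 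Since $\alpha>0$, maximizing a seller's payoff over the correct-good branch is equivalent to maximizing $MIG(\hat{\mathbf{r}}_1,\hat{\mathbf{r}}_2)$, so the subgame collapses to the peer prediction game.

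Before comparing answer contents, I would dispose of the malformed-good deviations, which settle parts (2) and (3) of the statement. If a seller commits an encryption that will not decrypt, or whose packaged question set is inconsistent with $\pi_i(Q)$, or that makes \textsc{PayFunc} inconsistent, then the opened good is incorrect in the sense of Definition~\ref{def:incorrect}; by Lemma~\ref{lem:rebuttal} the buyer rebuts and wins, and the deviating sellers forfeit their deposits. With the deposits already chosen (as in Lemma~\ref{lem:rebuttal}) to exceed the largest attainable payment, this is strictly worse than honest behavior, so honest encryption and honest commitment of the encrypted info are forced by the rebuttal threat. This confines any profitable deviation to the correct-good branch, where the reduction of the first paragraph applies.

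On that branch I would apply Lemma~\ref{lem:corr}: truth-telling strictly maximizes each seller's expected $MIG$, is a strict equilibrium, and every \emph{non-permutation} deviation is strictly worse. The only deviations that merely tie truth-telling are \emph{coordinated} permutation profiles. Here Assumption~\ref{assume:id} does the decisive work: because the sellers do not know each other's identity and must fix their answers at the (earlier) commitment step, they cannot agree on a common permutation at commitment time, so any unilateral permutation is a non-permutation profile and hence strictly worse by Lemma~\ref{lem:corr}. Thus for each seller, packaging the honest answers $\mathbf{r}_i$ is the unique best response, establishing part (1). For the strong (coalition) requirement needed by the overall strong SPE, I would note that a coordinated permutation would be the only joint deviation matching truth-telling's payoff, yet it requires the identity knowledge the sellers lack and yields only equal, not strictly greater, payoff to both; every other joint deviation strictly lowers at least one seller's $MIG$ payment. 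Hence no coalition can strictly benefit all members, and truth-telling is the unique strong equilibrium of this subgame. Finally, the buyer's check is (near-)costless and is what later enables her to detect and rebut an incorrect good, so checking the sellers' commitments honestly is optimal, whereas skipping it risks accepting an unverifiable good.

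I expect the main obstacle to be the uniqueness argument against permutation and coalition deviations rather than the reduction itself. Lemma~\ref{lem:corr} only guarantees that permutation profiles are payoff-\emph{equivalent} to truth-telling, not strictly dominated, so strictness must be recovered entirely from the sellers' inability to coordinate at commitment time; the delicate step is arguing carefully that Assumption~\ref{assume:id} turns every feasible unilateral deviation into a strictly-dominated non-permutation profile and that feasible coalition deviations are at best payoff-neutral.
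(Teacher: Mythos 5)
Your proposal is correct and follows essentially the same route as the paper's proof: backward induction within the stage, reduction of the answer choice to the peer prediction guarantee of Lemma~\ref{lem:corr}, Assumption~\ref{assume:id} (sellers cannot coordinate a permutation at commitment time) to break the tie with permutation profiles, and downstream protocol threats to force honest packaging, encryption, and commitment. Your only departures are refinements rather than a different approach---you enforce parts (2)--(3) via the rebuttal/deposit forfeiture of Lemma~\ref{lem:rebuttal} where the paper appeals to a looser ``almost zero-sum game'' argument, and you make the coalition (strong-NE) analysis explicit, which the paper defers to its wrap-up after the stage lemmas.
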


\begin{proof}
We start from the last step of this stage, the buyer checks the sellers' commitments. At this stage, the buyer cannot infer the private information (shown in the security proof). Thus, the rational buyer will check the commitment of encrypted information honestly i.e. the buyer will agree when it is correct (otherwise, the buyer loses the chance to buy the information) and disagree when it is wrong (otherwise, the buyer will lose in the Rebuttal stage).

We move backward to the sellers' parts: the information package and encrypted information commitment part. Given that all traders will behave rationally later (Lemma~\ref{lem:rebuttal},~\ref{lem:compute}), at this stage, the sellers must pick optimal $(\hat{\mathbf{r}}_1,\hat{\mathbf{r}}_2)$ to maximize $val=\textsc{PayFunc}(\hat{\mathbf{r}}_1,\hat{\mathbf{r}}_2)$. 

Lemma~\ref{lem:corr} shows that when the buyer assigns questions in a random order, $\textsc{PayFunc}(\hat{\mathbf{r}}_1,\hat{\mathbf{r}}_2)$ is maximized when $(\hat{\mathbf{r}}_1,\hat{\mathbf{r}}_2)=(\mathbf{r}_1,\mathbf{r}_2)$ and strictly maximized when there exists a permutation, such that $(\hat{\mathbf{r}}_1,\hat{\mathbf{r}}_2)$ is a permutation $(\mathbf{r}_1,\mathbf{r}_2)$. Note that Assumption~\ref{assume:id} guarantees that the sellers cannot communicate privately when they answer the questions and the sellers will prefer answer truthfully if the permutation strategy profile cannot bring them strictly more payments. Moreover, the hiding property of commitment scheme guarantees that the sellers cannot infer each other's order by public commitments of $\pi_1(Q)$ and $\pi_2(Q)$. 

Thus, it's optimal for the sellers to pick $(\hat{\mathbf{r}}_1,\hat{\mathbf{r}}_2)=(\mathbf{r}_1,\mathbf{r}_2)$ to maximize their payments $val$. Finally, we show that the sellers should package, encrypt and commit honestly. If the sellers do not package, encrypt and commit honestly, then it will either hurt the buyer or the sellers as SMind is almost a zero-sum game for the buyer and the sellers group. Rational seller will not hurt themselves and when it hurts the buyer, the rational buyers will disagree with the seller such that the sellers lose the chance to sell their information. Therefore, the rational sellers should package, encrypt and commit honestly. 
\end{proof}

\subsubsection{Assign Questions}

We move back to the initial stage, assign questions and will show that it is optimal for the buyer to follow the protocol honestly here. 

\begin{lemma}[Optimal strategy in Assign Questions stage: truthful strategy]\label{lem:assign}
In the Assign Questions stage, given that all traders will play rationally in the following stages, there exists proper deposits such that it is optimal for the buyer to follow the protocol honestly in this stage. 
\end{lemma}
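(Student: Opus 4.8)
The plan is to finish the backward induction by analysing the buyer's incentives in the Assign Questions stage, taking as given that every trader plays rationally in all later stages (Lemmas~\ref{lem:rebuttal},~\ref{lem:compute},~\ref{lem:finish}). I would frame the buyer's decision as a choice of three objects: the two permutations $\pi_1,\pi_2$, the underlying question set she encodes, and the consistency between the commitments $\texttt{Com}(\pi_i(Q))$ she posts to the contract and the pair $(\pi_i(Q),\texttt{Op}(\pi_i(Q)))$ she privately sends to seller $i$. Since the seller's commitment check is the buyer's immediate successor move, I would show that, conditioned on the fixed downstream play, each departure from honest play is weakly, and generically strictly, dominated.

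First I would dispatch consistency. Because each seller runs $Open(\texttt{Com}(\pi_i(Q)),\texttt{Op}(\pi_i(Q)),\pi_i(Q))$ and terminates on an invalid check, the buyer cannot hand a seller a question set different from the committed one together with a valid opening without breaking the binding property, which costs at least \textsc{AttackCost} and is never profitable. Hence the committed, the sent, and (after the seller answers) the packaged question sets must coincide. This yields two conclusions: the buyer is locked into consistency, so she cannot manufacture a future ``questions $\times$'' event and win a rebuttal unfairly; and any inconsistent play merely triggers seller termination, forfeiting the beneficial trade, which is dominated by honest completion under the deposit bounds inherited from Lemma~\ref{lem:rebuttal}.

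Next I would pin down the content of the assignment, using that SMind is almost zero-sum between the buyer and the seller group. Encoding a question set other than her true $Q$ returns labels for the wrong tasks and destroys the value the buyer came to purchase, so it is dominated for any buyer who values truthful labels enough to participate. Assigning a non-random, predictable order is worse still: it violates the precondition of Lemma~\ref{lem:corr}, so truth-telling is no longer forced, and it exposes the buyer to an inflated \textsc{PayFunc} through order/focal-point collusion (the failure mode flagged in the discussion of Assumption~\ref{assume:id}), while giving her no lever to push the price below the honest value. I would conclude that committing to the true $Q$ under two fresh, independent random permutations and sending consistent openings is a best response, supported by the same deposits as before.

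The step I expect to be the main obstacle is precisely the order-randomisation argument. Unlike the consistency claim, which is a clean cryptographic lock, establishing that honest random order strictly dominates requires reasoning about the sellers' equilibrium response to a buyer deviation, namely that relaxing the random-order hypothesis of Lemma~\ref{lem:corr} can only move the realised \textsc{PayFunc} and the information quality against the buyer. Here I would lean on Assumption~\ref{assume:id} to rule out explicit coordination and on the near zero-sum structure to convert ``the sellers can gain'' into ``the buyer loses'', and I would have to make the buyer's valuation of truthful labels explicit so that the bounded price movements a corrupting deviation might induce cannot outweigh the lost information.
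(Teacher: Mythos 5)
Your proposal follows essentially the same route as the paper's proof: backward induction within the stage, the binding property of the commitment scheme (priced at \textsc{AttackCost}) to lock consistency between what is committed and what is sent, and the near zero-sum structure of SMind to argue that any remaining deviation is ultimately borne by the buyer herself. On one point you are actually more careful than the paper: the paper lumps both deviation types into the dichotomy ``hurts the buyer or hurts the sellers, and in the latter case the rational seller will disagree with the commitment,'' but a seller who sees only her own $\pi_i(Q)$ cannot detect a non-uniform order from a single draw (and the hiding property prevents comparing orders), so the disagree-and-terminate response cannot police that deviation. Your reading---that a non-random order hurts the buyer directly, by voiding the precondition of Lemma~\ref{lem:corr} and exposing her to the collusion failure mode behind Assumption~\ref{assume:id} while offering no lever to lower the price---is the argument the paper's zero-sum sentence presumably intends, made explicit.

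The one genuine omission is that you condition throughout on the sellers performing the commitment check honestly (``each seller runs $Open(\cdot)$ and terminates on an invalid check''). That check is itself a strategic move \emph{inside} the Assign Questions stage; it is not part of the downstream play covered by Lemmas~\ref{lem:rebuttal}, \ref{lem:compute} and \ref{lem:finish}, so in the trust-free model it must be derived from incentives rather than assumed. The paper does this first: a seller who ignores an inconsistency between her questions and the commitment later forfeits her deposit in a rebuttal she is bound to lose, while a seller who falsely rejects a valid commitment wastes \textsc{ConCost} and the opportunity to sell, so with sufficiently large deposits honest checking is optimal. Without that step, your buyer-side best-response analysis rests on seller behavior that has not been justified; with it added, your plan matches the paper's proof.
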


\begin{proof}
We start from the last step of the Assign Questions stage, the sellers check the correctness of committed questions $\texttt{Com}(\pi_1(Q))$ and $\texttt{Com}(\pi_2(Q))$. If they ignore the inconsistency between their questions and the commitments, then they will lose their deposits in rebuttal. If they wrongly disagree with the correct commitments, the contract will be rescinded, then they will waste \textsc{ConCost} and lose the chance to sell their information. Therefore, with sufficiently large deposits, the rational sellers will check the committed questions honestly. 

When the buyer behave dishonestly in the Assign Question stage, it is possible that 1) the buyer does not commit properly, i.e. apply the incorrect commitment scheme or apply the correct commitment scheme but commit the wrong questions; 2) the buyer does not assign the questions set properly, for instance, not in a random order. 

Both of the cases will either hurt the buyer or the seller, since by thinking the sellers as an unit, SMind is almost a zero-sum game. Rational buyer will not hurt herself and if it hurts the seller, then the rational seller will disagree for this commitment. Thus it is optimal for the buyer to follow the protocol honestly in the Assign Questions stage. 
\end{proof}

After the above analysis, We are ready to finish our truthfulness proof. First we can see truth-telling is a SPE, but not a unique SPE since the sellers report the same but wrong value in the Compute Payment Function stage can also consist of a ``bad'' SPE. However, we note that these ``bad'' SPEs are not strong NE since the sellers can together deviate to the truthful strategy profile to benefit both of them. Thus, truth-telling is the unique strong SPE here.

\subsection{Security proof: cryptographic analysis}

In this section, we prove that honest-but-curious participants who follow the protocol cannot learn additional information from the other traders.\par

\begin{description}
    \item[Security against buyer]
    Security against buyer means that buyer should not learn additional information about seller's data before the sellers reveal the keys. Note that before revealing the keys, the buyer only has the encryption of the private information and the commitment of the keys. Then, based on the security of the encryption scheme and the hiding property of the commitment, SMind has security against buyer.

    \item[Security against seller]
    Security against seller means that a seller should not learn additional information about other seller's data (answers) during the whole protocol. From the security of MPC, a seller cannot learn any additional information of other seller's input by inspecting the communication transcript (messages being sent and received by a single seller during the MPC). From the hiding property of  commitment scheme, a seller cannot infer any additional information of the ciphertext of other seller's input (after she gets the decryption key, she cannot get other seller's input anyway).
    
    \item[Security against public] It means that the public should not learn information except the value of \textsc{PayFunc} submitted to the smart contract. After the keys are revealed, they cannot infer additional information anyway. They can only see the commitment of encrypted data. From the hiding property of commitment scheme, the public does not know the exact value of encrypted data. 
    
\end{description}


\section{Extension to multiple sellers}\label{sec:ext}

This section introduces a natural extension of SMind to the setting where there are multiple sellers. There are only two main differences. One is the payment function for each seller, and the other is the output of MPC protocol. 

We use $\reallywidehat{\mathbf{r}}_{-i}$ to denote the set of all sellers' answers excluding seller $i$. In the information trade setting with multiple sellers, the buyer pays each seller $i$ for 

$$\textsc{PayFunc}(\reallywidehat{\mathbf{r}}_i;\reallywidehat{\mathbf{r}}_{-i}):=\alpha \sum_{j,j\neq i} MIG(\reallywidehat{\mathbf{r}}_i,\reallywidehat{\mathbf{r}}_j)+\beta$$

Then, in the Compute Payment Function stage, all sellers run MPC protocol to output a payment vector $\mathbf{val}$ such that $\mathbf{val}(i)=\textsc{PayFunc}(\reallywidehat{\mathbf{r}}_i;\reallywidehat{\mathbf{r}}_{-i})$. If all sellers cannot reach an agreement on the payment vector, then their deposits are taken. Once they reach an agreement, they can reach the Transaction or Rebutal stage, like the two sellers' version. 

By going through all proofs carefully, this multiple sellers version SMind is still trust-free, truthful, and secure.

Compared with the two sellers version, the multiple sellers version SMind is more desirable in applications, due to the diversity in the payments. If there are three sellers, two high-quality, one low-quality, then the low-quality seller will be paid poorly since she has poor correlation with both other sellers, while the high-quality seller will be paid fairly since she has high correlation with another high-quality seller. Moreover, the low-quality seller cannot get other high-quality sellers' information, due to the security of SMind.

\section{Conclusion and Future Work}\label{sec:futurework}
In an unverifiable information trade scenario, we propose a trust-free, truthful, and secure information trade protocol, \emph{SMind}, by borrowing three cutting-edge tools that include peer prediction, secure multi-party computation, and smart contract.

A limitation of SMind is the lack of robustness. For simplicity of the game theoretic analysis, we let the sellers play a coordination game in the Compute Payment Function stage. However, if one of the sellers is irrational, this will lead to bad results for all sellers. One future direction is to design the protocol more delicately to make it robust. 

Another direct future work is to implement our protocol in smart contract to allow unverifiable information trade in different scenarios, for instance, the data trade in machine learning scenario.


\bibliographystyle{ACM-Reference-Format}
\bibliography{reference}


\begin{thebibliography}{40}


\ifx \showCODEN    \undefined \def \showCODEN     #1{\unskip}     \fi
\ifx \showDOI      \undefined \def \showDOI       #1{#1}\fi
\ifx \showISBNx    \undefined \def \showISBNx     #1{\unskip}     \fi
\ifx \showISBNxiii \undefined \def \showISBNxiii  #1{\unskip}     \fi
\ifx \showISSN     \undefined \def \showISSN      #1{\unskip}     \fi
\ifx \showLCCN     \undefined \def \showLCCN      #1{\unskip}     \fi
\ifx \shownote     \undefined \def \shownote      #1{#1}          \fi
\ifx \showarticletitle \undefined \def \showarticletitle #1{#1}   \fi
\ifx \showURL      \undefined \def \showURL       {\relax}        \fi
\providecommand\bibfield[2]{#2}
\providecommand\bibinfo[2]{#2}
\providecommand\natexlab[1]{#1}
\providecommand\showeprint[2][]{arXiv:#2}

\bibitem[\protect\citeauthoryear{Adler, Berryhill, Veneris, Poulos, Veira, and
  Kastania}{Adler et~al\mbox{.}}{2018}]%
        {adler2018astraea}
\bibfield{author}{\bibinfo{person}{John Adler}, \bibinfo{person}{Ryan
  Berryhill}, \bibinfo{person}{Andreas Veneris}, \bibinfo{person}{Zissis
  Poulos}, \bibinfo{person}{Neil Veira}, {and} \bibinfo{person}{Anastasia
  Kastania}.} \bibinfo{year}{2018}\natexlab{}.
\newblock \showarticletitle{Astraea: A decentralized blockchain oracle}.
\newblock \bibinfo{journal}{\emph{arXiv preprint arXiv:1808.00528}}
  (\bibinfo{year}{2018}).
\newblock


\bibitem[\protect\citeauthoryear{Asharov, Canetti, and Hazay}{Asharov
  et~al\mbox{.}}{2011}]%
        {asharov2011towards}
\bibfield{author}{\bibinfo{person}{Gilad Asharov}, \bibinfo{person}{Ran
  Canetti}, {and} \bibinfo{person}{Carmit Hazay}.}
  \bibinfo{year}{2011}\natexlab{}.
\newblock \showarticletitle{Towards a game theoretic view of secure
  computation}. In \bibinfo{booktitle}{\emph{Annual International Conference on
  the Theory and Applications of Cryptographic Techniques}}. Springer,
  \bibinfo{pages}{426--445}.
\newblock


\bibitem[\protect\citeauthoryear{Asharov and Lindell}{Asharov and
  Lindell}{2011}]%
        {asharov2011full}
\bibfield{author}{\bibinfo{person}{Gilad Asharov} {and} \bibinfo{person}{Yehuda
  Lindell}.} \bibinfo{year}{2011}\natexlab{}.
\newblock \showarticletitle{A Full Proof of the BGW Protocol for
  Perfectly-Secure Multiparty Computation.}. In
  \bibinfo{booktitle}{\emph{Electronic Colloquium on Computational Complexity
  (ECCC)}}, Vol.~\bibinfo{volume}{18}. \bibinfo{pages}{10--1007}.
\newblock


\bibitem[\protect\citeauthoryear{Beaver, Micali, and Rogaway}{Beaver
  et~al\mbox{.}}{1990}]%
        {beaver1990round}
\bibfield{author}{\bibinfo{person}{Donald Beaver}, \bibinfo{person}{Silvio
  Micali}, {and} \bibinfo{person}{Phillip Rogaway}.}
  \bibinfo{year}{1990}\natexlab{}.
\newblock \showarticletitle{The round complexity of secure protocols}. In
  \bibinfo{booktitle}{\emph{Proceedings of the twenty-second annual ACM
  symposium on Theory of computing}}. ACM, \bibinfo{pages}{503--513}.
\newblock


\bibitem[\protect\citeauthoryear{Ben-Or, Goldwasser, and Wigderson}{Ben-Or
  et~al\mbox{.}}{1988}]%
        {ben1988completeness}
\bibfield{author}{\bibinfo{person}{Michael Ben-Or}, \bibinfo{person}{Shafi
  Goldwasser}, {and} \bibinfo{person}{Avi Wigderson}.}
  \bibinfo{year}{1988}\natexlab{}.
\newblock \showarticletitle{Completeness theorems for non-cryptographic
  fault-tolerant distributed computation}. In
  \bibinfo{booktitle}{\emph{Proceedings of the twentieth annual ACM symposium
  on Theory of computing}}. ACM, \bibinfo{pages}{1--10}.
\newblock


\bibitem[\protect\citeauthoryear{Bogetoft, Christensen, Damg{\aa}rd, Geisler,
  Jakobsen, Kr{\o}igaard, Nielsen, Nielsen, Nielsen, Pagter,
  et~al\mbox{.}}{Bogetoft et~al\mbox{.}}{2009}]%
        {bogetoft2009secure}
\bibfield{author}{\bibinfo{person}{Peter Bogetoft}, \bibinfo{person}{Dan~Lund
  Christensen}, \bibinfo{person}{Ivan Damg{\aa}rd}, \bibinfo{person}{Martin
  Geisler}, \bibinfo{person}{Thomas Jakobsen}, \bibinfo{person}{Mikkel
  Kr{\o}igaard}, \bibinfo{person}{Janus~Dam Nielsen},
  \bibinfo{person}{Jesper~Buus Nielsen}, \bibinfo{person}{Kurt Nielsen},
  \bibinfo{person}{Jakob Pagter}, {et~al\mbox{.}}}
  \bibinfo{year}{2009}\natexlab{}.
\newblock \showarticletitle{Secure multiparty computation goes live}. In
  \bibinfo{booktitle}{\emph{International Conference on Financial Cryptography
  and Data Security}}. Springer, \bibinfo{pages}{325--343}.
\newblock


\bibitem[\protect\citeauthoryear{Bogetoft, Damg{\aa}rd, Jakobsen, Nielsen,
  Pagter, and Toft}{Bogetoft et~al\mbox{.}}{2006}]%
        {bogetoft2006practical}
\bibfield{author}{\bibinfo{person}{Peter Bogetoft}, \bibinfo{person}{Ivan
  Damg{\aa}rd}, \bibinfo{person}{Thomas Jakobsen}, \bibinfo{person}{Kurt
  Nielsen}, \bibinfo{person}{Jakob Pagter}, {and} \bibinfo{person}{Tomas
  Toft}.} \bibinfo{year}{2006}\natexlab{}.
\newblock \showarticletitle{A practical implementation of secure auctions based
  on multiparty integer computation}. In
  \bibinfo{booktitle}{\emph{International Conference on Financial Cryptography
  and Data Security}}. Springer, \bibinfo{pages}{142--147}.
\newblock


\bibitem[\protect\citeauthoryear{Buterin et~al\mbox{.}}{Buterin
  et~al\mbox{.}}{2014}]%
        {buterin2014next}
\bibfield{author}{\bibinfo{person}{Vitalik Buterin} {et~al\mbox{.}}}
  \bibinfo{year}{2014}\natexlab{}.
\newblock \showarticletitle{A next-generation smart contract and decentralized
  application platform}.
\newblock \bibinfo{journal}{\emph{white paper}} (\bibinfo{year}{2014}).
\newblock


\bibitem[\protect\citeauthoryear{Camenisch, Drijvers, Gagliardoni, Lehmann, and
  Neven}{Camenisch et~al\mbox{.}}{2018}]%
        {camenisch2018wonderful}
\bibfield{author}{\bibinfo{person}{Jan Camenisch}, \bibinfo{person}{Manu
  Drijvers}, \bibinfo{person}{Tommaso Gagliardoni}, \bibinfo{person}{Anja
  Lehmann}, {and} \bibinfo{person}{Gregory Neven}.}
  \bibinfo{year}{2018}\natexlab{}.
\newblock \showarticletitle{The Wonderful World of Global Random Oracles}. In
  \bibinfo{booktitle}{\emph{Annual International Conference on the Theory and
  Applications of Cryptographic Techniques}}. Springer,
  \bibinfo{pages}{280--312}.
\newblock


\bibitem[\protect\citeauthoryear{Canetti}{Canetti}{2001}]%
        {canetti2001universally}
\bibfield{author}{\bibinfo{person}{Ran Canetti}.}
  \bibinfo{year}{2001}\natexlab{}.
\newblock \showarticletitle{Universally composable security: A new paradigm for
  cryptographic protocols}. In \bibinfo{booktitle}{\emph{Foundations of
  Computer Science, 2001. Proceedings. 42nd IEEE Symposium on}}. IEEE,
  \bibinfo{pages}{136--145}.
\newblock


\bibitem[\protect\citeauthoryear{Clack, Bakshi, and Braine}{Clack
  et~al\mbox{.}}{2016}]%
        {clack2016smart}
\bibfield{author}{\bibinfo{person}{Christopher~D Clack},
  \bibinfo{person}{Vikram~A Bakshi}, {and} \bibinfo{person}{Lee Braine}.}
  \bibinfo{year}{2016}\natexlab{}.
\newblock \showarticletitle{Smart Contract Templates: essential requirements
  and design options}.
\newblock \bibinfo{journal}{\emph{arXiv preprint arXiv:1612.04496}}
  (\bibinfo{year}{2016}).
\newblock


\bibitem[\protect\citeauthoryear{Damg{\aa}rd, Pastro, Smart, and
  Zakarias}{Damg{\aa}rd et~al\mbox{.}}{2012}]%
        {damgaard2012multiparty}
\bibfield{author}{\bibinfo{person}{Ivan Damg{\aa}rd}, \bibinfo{person}{Valerio
  Pastro}, \bibinfo{person}{Nigel Smart}, {and} \bibinfo{person}{Sarah
  Zakarias}.} \bibinfo{year}{2012}\natexlab{}.
\newblock \showarticletitle{Multiparty computation from somewhat homomorphic
  encryption}.
\newblock In \bibinfo{booktitle}{\emph{Advances in Cryptology--CRYPTO 2012}}.
  \bibinfo{publisher}{Springer}, \bibinfo{pages}{643--662}.
\newblock


\bibitem[\protect\citeauthoryear{Dasgupta and Ghosh}{Dasgupta and
  Ghosh}{2013}]%
        {dasgupta2013crowdsourced}
\bibfield{author}{\bibinfo{person}{Anirban Dasgupta} {and}
  \bibinfo{person}{Arpita Ghosh}.} \bibinfo{year}{2013}\natexlab{}.
\newblock \showarticletitle{Crowdsourced judgement elicitation with endogenous
  proficiency}. In \bibinfo{booktitle}{\emph{Proceedings of the 22nd
  international conference on World Wide Web}}. International World Wide Web
  Conferences Steering Committee, \bibinfo{pages}{319--330}.
\newblock


\bibitem[\protect\citeauthoryear{Dong, Wang, Aldweesh, McCorry, and van
  Moorsel}{Dong et~al\mbox{.}}{2017}]%
        {dong2017betrayal}
\bibfield{author}{\bibinfo{person}{Changyu Dong}, \bibinfo{person}{Yilei Wang},
  \bibinfo{person}{Amjad Aldweesh}, \bibinfo{person}{Patrick McCorry}, {and}
  \bibinfo{person}{Aad van Moorsel}.} \bibinfo{year}{2017}\natexlab{}.
\newblock \showarticletitle{Betrayal, distrust, and rationality: Smart
  counter-collusion contracts for verifiable cloud computing}. In
  \bibinfo{booktitle}{\emph{Proceedings of the 2017 ACM SIGSAC Conference on
  Computer and Communications Security}}. ACM, \bibinfo{pages}{211--227}.
\newblock


\bibitem[\protect\citeauthoryear{Dziembowski, Eckey, and Faust}{Dziembowski
  et~al\mbox{.}}{2018}]%
        {dziembowski2018fairswap}
\bibfield{author}{\bibinfo{person}{Stefan Dziembowski}, \bibinfo{person}{Lisa
  Eckey}, {and} \bibinfo{person}{Sebastian Faust}.}
  \bibinfo{year}{2018}\natexlab{}.
\newblock \showarticletitle{FairSwap: How to fairly exchange digital goods}. In
  \bibinfo{booktitle}{\emph{Proceedings of the 2018 ACM SIGSAC Conference on
  Computer and Communications Security}}. ACM, \bibinfo{pages}{967--984}.
\newblock


\bibitem[\protect\citeauthoryear{Garay, Katz, Maurer, Tackmann, and
  Zikas}{Garay et~al\mbox{.}}{2013}]%
        {garay2013rational}
\bibfield{author}{\bibinfo{person}{Juan Garay}, \bibinfo{person}{Jonathan
  Katz}, \bibinfo{person}{Ueli Maurer}, \bibinfo{person}{Bjorn Tackmann}, {and}
  \bibinfo{person}{Vassilis Zikas}.} \bibinfo{year}{2013}\natexlab{}.
\newblock \showarticletitle{Rational protocol design: Cryptography against
  incentive-driven adversaries}. In \bibinfo{booktitle}{\emph{Foundations of
  Computer Science (FOCS), 2013 IEEE 54th Annual Symposium on}}. IEEE,
  \bibinfo{pages}{648--657}.
\newblock


\bibitem[\protect\citeauthoryear{Garay, Kiayias, and Leonardos}{Garay
  et~al\mbox{.}}{2015}]%
        {garay2015bitcoin}
\bibfield{author}{\bibinfo{person}{Juan Garay}, \bibinfo{person}{Aggelos
  Kiayias}, {and} \bibinfo{person}{Nikos Leonardos}.}
  \bibinfo{year}{2015}\natexlab{}.
\newblock \showarticletitle{The bitcoin backbone protocol: Analysis and
  applications}. In \bibinfo{booktitle}{\emph{Annual International Conference
  on the Theory and Applications of Cryptographic Techniques}}. Springer,
  \bibinfo{pages}{281--310}.
\newblock


\bibitem[\protect\citeauthoryear{Goldreich, Micali, and Wigderson}{Goldreich
  et~al\mbox{.}}{1987}]%
        {goldreich1987play}
\bibfield{author}{\bibinfo{person}{Oded Goldreich}, \bibinfo{person}{Silvio
  Micali}, {and} \bibinfo{person}{Avi Wigderson}.}
  \bibinfo{year}{1987}\natexlab{}.
\newblock \showarticletitle{How to play any mental game}. In
  \bibinfo{booktitle}{\emph{Proceedings of the nineteenth annual ACM symposium
  on Theory of computing}}. ACM, \bibinfo{pages}{218--229}.
\newblock


\bibitem[\protect\citeauthoryear{Izmalkov, Micali, and Lepinski}{Izmalkov
  et~al\mbox{.}}{2005}]%
        {izmalkov2005rational}
\bibfield{author}{\bibinfo{person}{Sergei Izmalkov}, \bibinfo{person}{Silvio
  Micali}, {and} \bibinfo{person}{Matt Lepinski}.}
  \bibinfo{year}{2005}\natexlab{}.
\newblock \showarticletitle{Rational secure computation and ideal mechanism
  design}. In \bibinfo{booktitle}{\emph{Foundations of Computer Science, 2005.
  FOCS 2005. 46th Annual IEEE Symposium on}}. IEEE, \bibinfo{pages}{585--594}.
\newblock


\bibitem[\protect\citeauthoryear{{Kong} and {Schoenebeck}}{{Kong} and
  {Schoenebeck}}{2016}]%
        {2016arXiv160501021K}
\bibfield{author}{\bibinfo{person}{Y. {Kong}} {and} \bibinfo{person}{G.
  {Schoenebeck}}.} \bibinfo{year}{2016}\natexlab{}.
\newblock \showarticletitle{{An Information Theoretic Framework For Designing
  Information Elicitation Mechanisms That Reward Truth-telling}}.
\newblock \bibinfo{journal}{\emph{ArXiv e-prints}} (\bibinfo{date}{May}
  \bibinfo{year}{2016}).
\newblock
\showeprint[arxiv]{cs.GT/1605.01021}


\bibitem[\protect\citeauthoryear{Kong and Schoenebeck}{Kong and
  Schoenebeck}{2018}]%
        {kong2018water}
\bibfield{author}{\bibinfo{person}{Yuqing Kong} {and} \bibinfo{person}{Grant
  Schoenebeck}.} \bibinfo{year}{2018}\natexlab{}.
\newblock \showarticletitle{Water from Two Rocks: Maximizing the Mutual
  Information}. In \bibinfo{booktitle}{\emph{Proceedings of the 2018 ACM
  Conference on Economics and Computation}}. ACM, \bibinfo{pages}{177--194}.
\newblock


\bibitem[\protect\citeauthoryear{Lamport, Shostak, and Pease}{Lamport
  et~al\mbox{.}}{1982}]%
        {lamport1982byzantine}
\bibfield{author}{\bibinfo{person}{Leslie Lamport}, \bibinfo{person}{Robert
  Shostak}, {and} \bibinfo{person}{Marshall Pease}.}
  \bibinfo{year}{1982}\natexlab{}.
\newblock \showarticletitle{The Byzantine generals problem}.
\newblock \bibinfo{journal}{\emph{ACM Transactions on Programming Languages and
  Systems (TOPLAS)}} \bibinfo{volume}{4}, \bibinfo{number}{3}
  (\bibinfo{year}{1982}), \bibinfo{pages}{382--401}.
\newblock


\bibitem[\protect\citeauthoryear{Malkhi, Nisan, Pinkas, Sella,
  et~al\mbox{.}}{Malkhi et~al\mbox{.}}{2004}]%
        {malkhi2004fairplay}
\bibfield{author}{\bibinfo{person}{Dahlia Malkhi}, \bibinfo{person}{Noam
  Nisan}, \bibinfo{person}{Benny Pinkas}, \bibinfo{person}{Yaron Sella},
  {et~al\mbox{.}}} \bibinfo{year}{2004}\natexlab{}.
\newblock \showarticletitle{Fairplay-Secure Two-Party Computation System.}. In
  \bibinfo{booktitle}{\emph{USENIX Security Symposium}},
  Vol.~\bibinfo{volume}{4}. San Diego, CA, USA, \bibinfo{pages}{9}.
\newblock


\bibitem[\protect\citeauthoryear{Miller, Resnick, and Zeckhauser}{Miller
  et~al\mbox{.}}{2005}]%
        {miller2005eliciting}
\bibfield{author}{\bibinfo{person}{Nolan Miller}, \bibinfo{person}{Paul
  Resnick}, {and} \bibinfo{person}{Richard Zeckhauser}.}
  \bibinfo{year}{2005}\natexlab{}.
\newblock \showarticletitle{Eliciting informative feedback: The peer-prediction
  method}.
\newblock \bibinfo{journal}{\emph{Management Science}} \bibinfo{volume}{51},
  \bibinfo{number}{9} (\bibinfo{year}{2005}), \bibinfo{pages}{1359--1373}.
\newblock


\bibitem[\protect\citeauthoryear{Mohassel and Rindal}{Mohassel and
  Rindal}{2018}]%
        {mohassel2018aby}
\bibfield{author}{\bibinfo{person}{Payman Mohassel} {and}
  \bibinfo{person}{Peter Rindal}.} \bibinfo{year}{2018}\natexlab{}.
\newblock \showarticletitle{ABY 3: a mixed protocol framework for machine
  learning}. In \bibinfo{booktitle}{\emph{Proceedings of the 2018 ACM SIGSAC
  Conference on Computer and Communications Security}}. ACM,
  \bibinfo{pages}{35--52}.
\newblock


\bibitem[\protect\citeauthoryear{Mohassel and Zhang}{Mohassel and
  Zhang}{2017}]%
        {mohassel2017secureml}
\bibfield{author}{\bibinfo{person}{Payman Mohassel} {and}
  \bibinfo{person}{Yupeng Zhang}.} \bibinfo{year}{2017}\natexlab{}.
\newblock \showarticletitle{SecureML: A system for scalable privacy-preserving
  machine learning}. In \bibinfo{booktitle}{\emph{2017 38th IEEE Symposium on
  Security and Privacy (SP)}}. IEEE, \bibinfo{pages}{19--38}.
\newblock


\bibitem[\protect\citeauthoryear{Nakamoto}{Nakamoto}{2008}]%
        {nakamoto2008bitcoin}
\bibfield{author}{\bibinfo{person}{Satoshi Nakamoto}.}
  \bibinfo{year}{2008}\natexlab{}.
\newblock \showarticletitle{Bitcoin: A peer-to-peer electronic cash system}.
\newblock  (\bibinfo{year}{2008}).
\newblock


\bibitem[\protect\citeauthoryear{Nielsen, Nordholt, Orlandi, and Burra}{Nielsen
  et~al\mbox{.}}{2012}]%
        {nielsen2012new}
\bibfield{author}{\bibinfo{person}{Jesper~Buus Nielsen},
  \bibinfo{person}{Peter~Sebastian Nordholt}, \bibinfo{person}{Claudio
  Orlandi}, {and} \bibinfo{person}{Sai~Sheshank Burra}.}
  \bibinfo{year}{2012}\natexlab{}.
\newblock \showarticletitle{A new approach to practical active-secure two-party
  computation}.
\newblock In \bibinfo{booktitle}{\emph{Advances in Cryptology--CRYPTO 2012}}.
  \bibinfo{publisher}{Springer}, \bibinfo{pages}{681--700}.
\newblock


\bibitem[\protect\citeauthoryear{Osborne et~al\mbox{.}}{Osborne
  et~al\mbox{.}}{2004}]%
        {osborne2004introduction}
\bibfield{author}{\bibinfo{person}{Martin~J Osborne} {et~al\mbox{.}}}
  \bibinfo{year}{2004}\natexlab{}.
\newblock \bibinfo{booktitle}{\emph{An introduction to game theory}}.
  Vol.~\bibinfo{volume}{3}.
\newblock \bibinfo{publisher}{Oxford university press New York}.
\newblock


\bibitem[\protect\citeauthoryear{Pagnia and G{\"a}rtner}{Pagnia and
  G{\"a}rtner}{1999}]%
        {pagnia1999impossibility}
\bibfield{author}{\bibinfo{person}{Henning Pagnia} {and}
  \bibinfo{person}{Felix~C G{\"a}rtner}.} \bibinfo{year}{1999}\natexlab{}.
\newblock \bibinfo{booktitle}{\emph{On the impossibility of fair exchange
  without a trusted third party}}.
\newblock \bibinfo{type}{{T}echnical {R}eport}. \bibinfo{institution}{Technical
  Report TUD-BS-1999-02, Darmstadt University of Technology~…}.
\newblock


\bibitem[\protect\citeauthoryear{Peterson, Krug, Zoltu, Williams, and
  Alexander}{Peterson et~al\mbox{.}}{[n. d.]}]%
        {peterson2018augur}
\bibfield{author}{\bibinfo{person}{Jack Peterson}, \bibinfo{person}{Joseph
  Krug}, \bibinfo{person}{Micah Zoltu}, \bibinfo{person}{Austin~K Williams},
  {and} \bibinfo{person}{Stephanie Alexander}.} \bibinfo{year}{[n.
  d.]}\natexlab{}.
\newblock \bibinfo{title}{Augur: a Decentralized Oracle and Prediction Market
  Platform}.
\newblock
\newblock


\bibitem[\protect\citeauthoryear{Prelec}{Prelec}{2004}]%
        {prelec2004bayesian}
\bibfield{author}{\bibinfo{person}{Dra{\v{z}}en Prelec}.}
  \bibinfo{year}{2004}\natexlab{}.
\newblock \showarticletitle{A Bayesian truth serum for subjective data}.
\newblock \bibinfo{journal}{\emph{science}} \bibinfo{volume}{306},
  \bibinfo{number}{5695} (\bibinfo{year}{2004}), \bibinfo{pages}{462--466}.
\newblock


\bibitem[\protect\citeauthoryear{Shnayder, Agarwal, Frongillo, and
  Parkes}{Shnayder et~al\mbox{.}}{2016}]%
        {2016arXiv160303151S}
\bibfield{author}{\bibinfo{person}{Victor Shnayder}, \bibinfo{person}{Arpit
  Agarwal}, \bibinfo{person}{Rafael Frongillo}, {and} \bibinfo{person}{David~C
  Parkes}.} \bibinfo{year}{2016}\natexlab{}.
\newblock \showarticletitle{Informed truthfulness in multi-task peer
  prediction}. In \bibinfo{booktitle}{\emph{Proceedings of the 2016 ACM
  Conference on Economics and Computation}}. ACM, \bibinfo{pages}{179--196}.
\newblock


\bibitem[\protect\citeauthoryear{Sompolinsky and Zohar}{Sompolinsky and
  Zohar}{2015}]%
        {sompolinsky2015secure}
\bibfield{author}{\bibinfo{person}{Yonatan Sompolinsky} {and}
  \bibinfo{person}{Aviv Zohar}.} \bibinfo{year}{2015}\natexlab{}.
\newblock \showarticletitle{Secure high-rate transaction processing in
  bitcoin}. In \bibinfo{booktitle}{\emph{International Conference on Financial
  Cryptography and Data Security}}. Springer, \bibinfo{pages}{507--527}.
\newblock


\bibitem[\protect\citeauthoryear{Team}{Team}{2017}]%
        {team2017gnosis}
\bibfield{author}{\bibinfo{person}{Gnosis Team}.}
  \bibinfo{year}{2017}\natexlab{}.
\newblock \showarticletitle{Gnosis-Whitepaper}.
\newblock \bibinfo{journal}{\emph{URL:
  https://gnosis.pm/assets/pdf/gnosis-whitepaper.pdf}} (\bibinfo{year}{2017}).
\newblock


\bibitem[\protect\citeauthoryear{Teutsch and Reitwie{\ss}ner}{Teutsch and
  Reitwie{\ss}ner}{2017}]%
        {teutsch2017scalable}
\bibfield{author}{\bibinfo{person}{Jason Teutsch} {and}
  \bibinfo{person}{Christian Reitwie{\ss}ner}.}
  \bibinfo{year}{2017}\natexlab{}.
\newblock \showarticletitle{A scalable verification solution for blockchains}.
\newblock \bibinfo{journal}{\emph{url: https://people. cs. uchicago.
  edu/teutsch/papers/truebit pdf}} (\bibinfo{year}{2017}).
\newblock


\bibitem[\protect\citeauthoryear{Walfish and Blumberg}{Walfish and
  Blumberg}{2015}]%
        {walfish2015verifying}
\bibfield{author}{\bibinfo{person}{Michael Walfish} {and}
  \bibinfo{person}{Andrew~J Blumberg}.} \bibinfo{year}{2015}\natexlab{}.
\newblock \showarticletitle{Verifying computations without reexecuting them}.
\newblock \bibinfo{journal}{\emph{Commun. ACM}} \bibinfo{volume}{58},
  \bibinfo{number}{2} (\bibinfo{year}{2015}), \bibinfo{pages}{74--84}.
\newblock


\bibitem[\protect\citeauthoryear{Wang, Ranellucci, and Katz}{Wang
  et~al\mbox{.}}{2017}]%
        {wang2017authenticated}
\bibfield{author}{\bibinfo{person}{Xiao Wang}, \bibinfo{person}{Samuel
  Ranellucci}, {and} \bibinfo{person}{Jonathan Katz}.}
  \bibinfo{year}{2017}\natexlab{}.
\newblock \showarticletitle{Authenticated garbling and efficient maliciously
  secure two-party computation}. In \bibinfo{booktitle}{\emph{Proceedings of
  the 2017 ACM SIGSAC Conference on Computer and Communications Security}}.
  ACM, \bibinfo{pages}{21--37}.
\newblock


\bibitem[\protect\citeauthoryear{Yan and Wong}{Yan and Wong}{[n. d.]}]%
        {yandeaux}
\bibfield{author}{\bibinfo{person}{Jeff Yan} {and} \bibinfo{person}{Brian
  Wong}.} \bibinfo{year}{[n. d.]}\natexlab{}.
\newblock \showarticletitle{Deaux: A Performant Decentralized Prediction
  Market}.
\newblock  (\bibinfo{year}{[n. d.]}).
\newblock


\bibitem[\protect\citeauthoryear{Yao}{Yao}{1986}]%
        {yao1986generate}
\bibfield{author}{\bibinfo{person}{Andrew Chi-Chih Yao}.}
  \bibinfo{year}{1986}\natexlab{}.
\newblock \showarticletitle{How to generate and exchange secrets}. In
  \bibinfo{booktitle}{\emph{Foundations of Computer Science, 1986., 27th Annual
  Symposium on}}. IEEE, \bibinfo{pages}{162--167}.
\newblock


\end{thebibliography}

\newpage 
\appendix
\section{Cryptographic building blocks}\label{sec:cryptobb}
\subsection{Programmable Global Random Oracle}\label{sec:pgro}

In our proof, we use a restricted programmable and obeservable global random oracle, a model of a perfect hash function returning uniformly random values. All parties in the protocol can query the global random oracle. Following the works of \cite{camenisch2018wonderful, dziembowski2018fairswap}, we are able to use encryption scheme and commitment scheme with this random oracle in our protocol. Figure \ref{tab:randora} shows the ideal functionalities of the random oracle. 

\begin{table}[!htb]
	\caption{The ideal functionality of restricted programmable and observable random oracle, denoted as $\mathcal{H}$. }
	\label{tab:randora}
	\begin{minipage}{\columnwidth}
		\begin{center}
			\begin{tabular}{ll}
				\toprule
				\textbf{Parameter}:  output size $\mu$\\
				\textbf{Variable}: initially empty lists $L$, $P$ storing queries and programmed queries, respectively, \\
				and a list $L_{\mathcal{A}}$ storing illegitimate queries made by the adversary.\\
				1. \textbf{Query}:\\
				Upon receiving a query with $(\texttt{query}, q, \textsc{id})$ from party of session $\textsc{id}'$:\\
				\qquad if $(q, r, \textsc{id})\in L$, respond with $r$.\\
				\qquad if $(q, \textsc{id})\notin L$, sample $r\in \{0, 1\}^\mu$, store $(q, r, \textsc{id})$ in $L$, and respond with $r$.\\
				\qquad if $\textsc{id}\neq\textsc{id}'$, store $(q, r)$ in $L_{\mathcal{A}}$, respond with $r$\\
				2. \textbf{Program}:\\
				Upon receiving a program instruction $(\texttt{program}, q, r, \textsc{id})$ from the adversary $\mathcal{A}$:\\
				\qquad if $( q, r, \textsc{id})\in L$, abort. \\
				\qquad else, store $(q, r, \textsc{id})$ in $L$ and $(q, \textsc{id})$ in $P$\\
				Upon receiving $(\texttt{isProgrammed}, q)$ from a party of session $\textsc{id}$:\\
				\qquad if $(q, \textsc{id})\in P$, respond with 1\\
				\qquad else respond with 0\\
	            3. \textbf{Observe}:\\
	            Upon receiving $(\texttt{observe})$ from the adversary, respond with $L_\mathcal{A}$.
							\\
				\bottomrule
			\end{tabular}
		\end{center}
		\bigskip\centering
	\end{minipage}
\end{table}%

The programmability of global random oracle provides a strong power for the simulator in proof. \emph{It enables the simulator to send the buyer a garbage encryption or a garbage commitment without knowing the real encrypted or committed message and then program the random oracle to decrypt or open the previous garbage encryption or commitment to the real message afterwards when the real message is revealed} (see details in \cite{dziembowski2018fairswap}). We will employ this property of programmbility several times in the future security proof. Note that the simulator can succeed except with negligible probability.

\subsection{Encryption Scheme}
We firstly define the security of a encryption scheme (IND-CPA secure), and then give a secure encryption algorithm. 

Intuitively speaking, IND-CPA secure symmetric encryption guarantees that the encryption of two strings is indistinguishable, and further, an adversary (e.g. an eavesdropper in the communication channel between buyer and sellers) cannot distinguish the ciphertext of $m_0, m_1$ chosen by her in any case.

\begin{definition}indistinguishability under chosen ciphertext attack (IND-CPA)\par
Let $\mathcal{E} = (\text{Gen}, \enc, \dec)$ be a symmetric encryption scheme, and let $\mathcal{A}$ be an polynomial adversary who has access to an oracle. 
On input $(m_0, m_1) $, $LR(\cdot,\cdot, b)$ responds with $\enc_k(m_b)$.
We consider the following two experiments:\par
$\textbf{Expt}_{\mathcal{A}}^0 : k\leftarrow \text{Gen}(1^n), b\leftarrow \mathcal{A}^{LR(\cdot, \cdot, 0)}, \text{return } b$\par

$\textbf{Expt}_{\mathcal{A}}^1 : k\leftarrow \text{Gen}(1^n), b\leftarrow \mathcal{A}^{LR(\cdot, \cdot, 1)}, \text{return } b $\par
Then $\mathcal{E}$ is IND-CPA secure encryption scheme if 
$$ |Pr[\textbf{Expt}_{\mathcal{A}}^0 =1] - \Pr[\textbf{Expt}_{\mathcal{A}}^1 =1 ] |\le \mu(n)$$
\end{definition}

In the previous section \ref{sec:crypt} we introduce the definition of a encryption scheme. Here we provide the encryption algorithm that is IND-CPA secure in Algorithm \ref{alg:enc} and \ref{alg:dec} \cite{dziembowski2018fairswap}.

\begin{algorithm}[!htb]
\DontPrintSemicolon
	\SetAlgoNoLine
	\KwIn{message $m=(m_1, m_2, \dots, m_n)$, s.t. $\forall i, |m_i|=\mu$}
	$k\leftarrow \{0, 1\}^\kappa$, s.t. $\forall i, \mathcal{H}(\texttt{isProgrammed}, k\|i)\neq 1$\;
	\For{$i=1$ \KwTo $n$}{
	$k_i = \mathcal{H}(k\|i)$\;
	$\enc(m)_i=k_i\oplus m_i$}
	\KwOut{$\enc(m)=(\enc(m)_1, \dots, \enc(m)_n)$}
	\caption{Algorithm for encryption $\enc$.}
	\label{alg:enc}
\end{algorithm}

\begin{algorithm}[!htb]
\DontPrintSemicolon
    \SetAlgoNoLine
    \KwIn{ciphertext $c=(c_1, \dots, c_n)$, s.t. $|c_i|=\mu$, and key $k$}
    \For{$i\leftarrow1$ \KwTo $n$}{
    $k_i=\mathcal{H}(k\|i)$\;
    \eIf{$\mathcal{H}(\texttt{isProgrammed}, k\|i)$}{\KwOut{$\bot$}}
    {$m_i=k_i\oplus c_i$}}
    \KwOut{$m=(m_1, m_2, \dots, m_n)$}
    \caption{Algorithm for decryption.}
    \label{alg:dec}
\end{algorithm}

\subsection{Commitment Scheme}
Recall the definition of a commitment scheme in section \ref{sec:crypt}. Here we provide the commitment scheme used in our proof, contructed with the global random oracle.

Algorithm \ref{alg:commit} shows that if $\texttt{Op}(m)$ is long enough, the chance for $\mathcal{A}$ to distinguish $\texttt{Com}(m_1)$ and $\texttt{Com}(m_2)$ is negligible. Thus the hiding property is guaranteed. Algorithm \ref{alg:opencommit} shows how to open the commitment.
It is also hard to break the binding property, since $\mathcal{H}$ outputs uniformly distributed random numbers. 

\begin{algorithm}[!htb]
\DontPrintSemicolon
	\SetAlgoNoLine
	\KwIn{message $m\in \{0, 1\}^*$}
	$\texttt{Op}(m))\leftarrow \{0, 1\}^\kappa$, s.t. $\mathcal{H}(\texttt{isProgrammed}, m\|\texttt{Op}(m))\neq 1$\;
	$\texttt{Com}(m)\leftarrow \mathcal{H}(m\|\texttt{Op}(m))$\;
	\KwOut{$(\texttt{Com}(m), \texttt{Op}(m))$}
	\caption{Algorithm for commitment \texttt{Com}.}
	\label{alg:commit}
\end{algorithm}

\begin{algorithm}[!htb]
\DontPrintSemicolon
	\SetAlgoNoLine
	\KwIn{$\texttt{Com}(m)$, $\texttt{Op}(m)$}
	\eIf {$\texttt{Com}(m)=\mathcal{H}(m\|\texttt{Op}(m))$ \textbf{and}  $\mathcal{H}(\texttt{isProgrammed}, m\|\texttt{Op}(m))=0$}{
	\KwOut{1}}
	{
	\KwOut{0}
	}
	\caption{Algorithm for opening commitment.}
	\label{alg:opencommit}
\end{algorithm}

\subsection{Security definition: simulation for semi-honest adversaries}

Simulation is a way of comparing what happens in the ``real world'' to what happens in an ``ideal world''. The ``ideal world'' is usually secure by definition. For any adversary in real world, if there exists a simulator in ideal world who can achieve almost the same attack as adversary in real world, then the protocol is said to be secure.

We start to formally define security in the presence of semi-honest, ``honest-but-curious'', adversaries, i.e. the definition of $t$-private \cite{asharov2011full}. At a high level, a protocol is $t$-private if the view of up to $t$ corrupted parties in a real protocol execution can be generated by a simulator.

For a protocol $\Pi$, the view of the each party $i$, $\view^{\Pi}_i(\mathbf{x})$ is defined as her inputs, internal coin tosses and the messages she receives during an execution of a protocol $\Pi(\mathbf{x})$. For each subset of parties $I$, $\view^{\Pi}_{I}$ is defined as the union views of all parties in $I$. 

\begin{definition}[$t$-privacy of $m$-party protocols \cite{asharov2011full}]
Let $f$ be a a probabilistic $m$-ary functionality that maps $m$ inputs to $m$ outputs, i.e., $(\{0,1\}^*)^m \mapsto (\{0,1\}^*)^m$ and let $\Pi$ be a protocol. $\Pi$ is \emph{$t$-private} for $f$ if for every $I\subset [m]$ of cardinality at most $t$, $\Pi$ is \emph{$I$-private} for $f$ in the sense that there exists a probabilistic polynomial-time algorithm $\simu$ such that for every input $\mathbf{x}$ it holds that:
\begin{description}
\item \textit{Privacy}: $\simu(\text{security parameter},I\text{'s inputs and outputs})$ is computationally indistinguishable with $I$'s views in protocol $\Pi$ given input $\mathbf{x}$, i.e., $\view_I^{\Pi}(\mathbf{x})$;
\item \textit{Correctness}: when $f$ is a deterministic function, $f(\mathbf{x})$ equals $\Pi$'s output $\text{output}^{\Pi}(\mathbf{x})$, when $f$ is a probabilistic function, $f(\mathbf{x})$'s distribution equals $\text{output}^{\Pi}(\mathbf{x})$'s distribution. 
\end{description}
$\Pi$ is full-private for $f$ when $\Pi$ is $m$-private for $f$. 
\end{definition}

We introduce a common technique, sequential composition, in security proof that will simplify the proof substantially. Informally, it says that if a protocol $\Pi$'s sub-protocols are secure and satisfy a ``sequential and isolated'' condition, the protocol $\Pi$ is secure as well. 

\begin{lemma}[Sequential Composition\cite{canetti2001universally}]\label{lem:sq} 
    Let $g_i$ denote ideal functionality, and let $\Pi^{\rho_1,...}$\footnote{$\Pi^{\rho_1,...,\rho_n}$ denotes a protocol $\Pi$ which calls ideal functionality $\rho_1,...\rho_n$ during its execution.} denote the real world protocol such that 1) sub-protocol $\rho_1,...$ are called sequentially, 2) no $\Pi$-messages sent while $\rho_1, ... $ are executed. If $\Pi^{g_1,...}$ is a secure hybrid world protocol computing $f$ and each $\rho_i$ securely computes $g_i$, then $\Pi^{\rho_1,...} $ is a secure real-world protocol computing $f$.  
\end{lemma}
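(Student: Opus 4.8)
The plan is to prove this by a \emph{hybrid argument} that composes the simulators we are handed. By hypothesis we already possess (i) a simulator $\simu^{\Pi^g}$ witnessing that $\Pi^{g_1,\ldots}$ is $I$-private for $f$ in the hybrid world, and (ii) for each subprotocol a simulator $\simu^{\rho_i}$ witnessing that $\rho_i$ is $I$-private for $g_i$. From these I would build a single simulator $\simu^{\Pi^\rho}$ for the composed real-world protocol $\Pi^{\rho_1,\ldots}$ as follows: run $\simu^{\Pi^g}$ to generate the corrupted parties' view of the outer protocol; each time $\simu^{\Pi^g}$ reaches a call to the ideal functionality $g_i$ (at which point it already supplies the inputs that $I$ feeds to $g_i$ and the outputs $g_i$ returns to $I$), invoke $\simu^{\rho_i}$ on exactly those inputs and outputs to manufacture the view of the corrupted parties during the \emph{internal} execution of $\rho_i$, and splice that view into the transcript in place of the ideal call. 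The two conditions in the statement---that the $\rho_i$ run sequentially and that no $\Pi$-message is sent while a $\rho_i$ executes---are precisely what make this splicing well defined: each subprotocol's input is fixed before it begins, its output is consumed only after it ends, and no outer-protocol traffic is interleaved, so $\simu^{\rho_i}$ really is applied to an isolated stand-alone run of $\rho_i$.

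For the analysis, suppose an execution of $\Pi$ makes $\ell$ ideal calls, with $\ell$ polynomial in the security parameter. I would define hybrids $H_0,\ldots,H_\ell$, where $H_j$ runs the first $j$ subprotocol slots ideally---i.e. as a call to the corresponding $g_i$ together with a $\simu^{\rho_i}$-simulated internal view---and runs the remaining $\ell-j$ slots as the genuine real subprotocol $\rho_i$. Then $H_0$ is exactly the real composed view $\view^{\Pi^\rho}_I(\mathbf{x})$, while $H_\ell$ is the hybrid view $\view^{\Pi^g}_I(\mathbf{x})$ with every ideal call's internal transcript simulated. Adjacent hybrids $H_{j-1}$ and $H_j$ differ only in whether the $j$-th slot is a real run of $\rho_i$ or an ideal call with a simulated view; because the input handed to that slot is fixed at call time and the slot runs in isolation, the $I$-privacy of $\rho_i$ gives $H_{j-1}\approx_c H_j$, since a distinguisher between these two hybrids (which can itself sample the rest of the execution in polynomial time given the appropriate auxiliary state) would break the security of $\rho_i$. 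Summing the $\ell$ negligible gaps yields $H_0\approx_c H_\ell$, so the real composed view is computationally indistinguishable from the hybrid view with simulated internal transcripts.

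Finally I would chain this with the security of the outer protocol: by $I$-privacy of $\Pi^{g_1,\ldots}$ the view $\view^{\Pi^g}_I(\mathbf{x})$ is indistinguishable from $\simu^{\Pi^g}$'s output, and resubstituting the simulated internal views shows $H_\ell$ is indistinguishable from $\simu^{\Pi^\rho}$'s output; transitivity then gives $\view^{\Pi^\rho}_I(\mathbf{x})\approx_c \simu^{\Pi^\rho}(\ldots)$, which is the privacy requirement. Correctness follows in parallel: the correctness clause of each $\simu^{\rho_i}$ guarantees $\rho_i$'s output distribution equals $g_i$'s, so replacing real subprotocols by ideal calls leaves unchanged the distribution of the value fed back into $\Pi$, and the correctness of $\Pi^{g_1,\ldots}$ for $f$ carries over. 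The main obstacle I anticipate is the bookkeeping at the splice points: one must argue that the input each corrupted party contributes to $\rho_i$ is fully determined by the outer state at call time (so that $\simu^{\rho_i}$ may legitimately be invoked on it) and that the ``rest of the world'' in each hybrid---the outer protocol plus the other $\ell-1$ subprotocol executions---is efficiently samplable, so that the single-step reductions against $\simu^{\rho_i}$ go through. The sequential, non-interleaved execution assumption is exactly the hypothesis that licenses this; without it the stand-alone simulators $\simu^{\rho_i}$ could not be invoked in isolation and the hybrid argument would collapse.
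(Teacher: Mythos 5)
The paper offers no proof of this lemma---it is imported verbatim from the cited composition theorem of \cite{canetti2001universally}---so the only meaningful comparison is with the standard proof in that literature, and your proposal reproduces it faithfully: the composed simulator built by splicing the subprotocol simulators $\simu^{\rho_i}$ into the outer simulator's ideal-call slots, the per-slot hybrids $H_0,\ldots,H_\ell$ bridged by the stand-alone $I$-privacy of each $\rho_i$, and the final chaining through the privacy of $\Pi^{g_1,\ldots}$ constitute exactly the canonical argument, with the sequentiality and no-interleaving hypotheses invoked at precisely the points where they are needed. Your closing caveats (each slot's inputs being fixed by the outer state at call time, and efficient samplability of the rest of the execution so the single-slot reductions go through) are the genuine technical content of that proof in the semi-honest $t$-privacy setting the paper works in, so the proposal is correct and essentially the same approach as the cited source.
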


\section{Formal security proof}\label{sec:security}

We present our formal security proof here. Note that SMind motivates all traders to behave honestly. Thus, we only define and prove the security against semi-honest, i.e., ``honest-but-curious'', adversaries. We will first give the definition of an ideal ITP which is secure by definition and then show that there exists simulators that can simulate SMind only with views from the ideal ITP, which implies that SMind is secure, as SMind reveals no more information than the ideal ITP. 

For simplicity of the proof, we employ the sequential composition technique here and first consider a middle protocol $\Pi$. We define protocol $\Pi$ as the protocol that is the same as SMind except the Compute Payment Function stage. In $\Pi$'s Compute Payment Function stage, the sellers submit their infos to a trusted center $\mathcal{M}$ and $\mathcal{M}$ finishes the payment computation privately and reveals the output publicly. Thus, $\Pi$ replaces the MPC session in SMind by a trusted center. We will show that $\Pi$ is computationally indistinguishable with the ideal world. Then according to Lemma~\ref{lem:sq} and Lemma~\ref{lem:mpc}, SMind is also computationally indistinguishable with the ideal world.

\begin{definition}[Ideal ITP]
An ideal ITP's functionality:  
\begin{description}
\item[Sign Contract] Upon receiving deposits and the consistent payment functions from the traders, reveals the contract publicly. 
\item[Assign Questions] Upon receiving question set $Q$ from buyer, generates two random orders $\pi_1,\pi_2$ and sends $\pi_i(Q)$ to each seller $\mathcal{S}_i$ privately.
\item[Submit Answers] Waits for the sellers to finish the questions and collects each seller's info privately. 
\item[Compute Payment Function] Upon receiving info$_i$ from two sellers, computes \textsc{PayFunc}, and reveals the output to public.
\item[Make Transactions] Sends infos to the buyer and makes transactions based on the contract. 
\end{description}
\end{definition}

\begin{definition}[Security ITP (formal)]
An ITP is secure if the ITP fulfills the ideal ITP's functionality, and \emph{in every stage}, ITP is full-private for ideal ITP. 
\end{definition}

We start to show the middle protocol $\Pi$ is secure by enumerating all possible subsets $I$ and show that $\Pi$ is $I$-private for ideal ITP in every stage. 

\paragraph{Notations} We use the subscript of protocol $\Pi$ to denote the stages $\Pi$ is in. For instance, $\Pi_{s-c}$ represents the protocol execution from Sign Contract stage to Compute Payment Function stage. When the real message is $m$, we sometimes use $\reallywidehat{m}$ to denote the simulated $m$. For instance, we use $\reallywidehat{\texttt{Com}(m)}$ to denote the simulation of $\texttt{Com}(m)$. Although $\reallywidehat{\texttt{Com}(m)}$ has $m$, $\reallywidehat{\texttt{Com}(m)}$ is usually a random string that is generated \emph{independent of} $m$, i.e., without the knowledge of $m$. 

In the simulation, we assume the simulator in the ideal world can simulate the smart contract internally. The simulation for smart contract is very similar with \cite{dziembowski2018fairswap} and we omit the detailed description of the simulation in our proof.

\subsection{Security against the smart contract/public: $\mathcal{SC}$-private}
Before the Compute Payment Function stage, the public's view in $\Pi$ is 
$$ \view_{\mathcal{SC}}^{\Pi_{s-c}}=(\{\texttt{Com}(\pi_i(Q)),\texttt{Com}(\enc_{\text{info-key}_i}(\text{info}_i)),\texttt{Com}(\text{info-key}_i)|i=1,2\}) $$
However, in ideal world, a smart contract neither sends nor receives message from the ideal functionality. 

At a high level, the simulator can replace $\text{info}_i$ by garbages, generate keys, openings and then simulate the above view accordingly. In detail, The simulator can simulate the view by generating random $\reallywidehat{\text{info-key}_i},\reallywidehat{\texttt{Op}(\text{info-key}_i)}, i=1,2$ and then outputting

$$ \{(\reallywidehat{\texttt{Com}(\pi_i(Q))},\texttt{Com}(\enc_{\reallywidehat{\text{info-key}_i}}(1^{|\text{info}_i|})),\texttt{Com}(\reallywidehat{\text{info-key}_i})|i=1,2\}) $$
to simulate $\view_{\mathcal{SC}}^{\Pi_{s-c}}$ respectively, where $\reallywidehat{\texttt{Com}(\pi_i(Q))}$ denotes the simulated commitment for question set. But in fact, $\reallywidehat{\texttt{Com}(\pi_i(Q))}$ is a random string that is generated \emph{without} the knowledge of $\pi_i(Q)$. Since both our commitment scheme is constructed based on the global random oracle, the output of the simulator is computationally indistinguishable with the real view, only except a negligible probability, according to the property of the global random oracle.

In the Compute Payment Function stage, the public additionally views the output of the payment function while the simulator is also given this output in the ideal world. 

After the Open and Check Goods stage, the public in $\Pi$ additionally has $$\{(\texttt{Op}(\text{info-key}_i),\text{info-key}_i)|i=1,2\}.$$ 

The simulator can reveal the previously used 
$$\{(\reallywidehat{\texttt{Op}(\text{info-key}_i)},\reallywidehat{\text{info-key}_i})|i=1,2\}$$ without being distinguished since it is consistent with the simulator's previous output.

\subsection{Security against the buyer: $\mathcal{B}$-private}

Before the Open and Check Goods stage, the buyer's view in $\Pi$ is 
$$\view_{\mathcal{B}}^{\Pi_{s-c}}=(Q,\{\pi_i,\texttt{Op}(\pi_i(Q)),\enc_{\text{info-key}_i}(\text{info}_i),\texttt{Op}(\enc_{\text{info-key}_i}(\text{info}_i)),\texttt{Com}(\text{info-key}_i)|i=1,2\}).$$

We start to construct the simulator $\simu_{\mathcal{B}}$ to simulate the buyer's view with only the input message $Q$ and security parameter. 

$\simu_{\mathcal{B}}$ will first generates two random orders $\reallywidehat{\pi_1},\reallywidehat{\pi_2}$ and four randomness $$\reallywidehat{\texttt {Op}(\pi_i(Q))},\reallywidehat{\texttt{Op}(\enc_{\text{info-key}_i}(\text{info}_i))},i=1,2$$ as openings. Note here we do not need to know the message to generate its opening since we just use $\texttt{Op}(m)$ to denote the opening for $m$'s commitment and in fact, the randomness $\texttt{Op}(m)$ is independent of $m$. 

Then $\simu_{\mathcal{B}}$ generates four random strings $\reallywidehat{\texttt{Com}(\text{info-key}_i)}, i=1,2$, $\reallywidehat{\enc_{\text{info-key}_1}(\text{info}_i)}, i=1,2$. The four random strings are generated independently and uniform at random \emph{without any knowledge of infos or info-keys}. \emph{We will use the programmability of our global random oracle to program these four random strings such that they will correspond to future revealed real info-keys and infos} (Section~\ref{sec:pgro}). Since both our commitment scheme and encryption scheme are constructed based on the global random oracle, the output of $\simu_{\mathcal{B}}$ is computationally indistinguishable with $\view_{\mathcal{B}}^{\Pi}$, only except a negligible probability, according to the property of the global random oracle.

After the Open and Check Goods stage, the buyer's view in $\Pi$ additionally has $$\view_{\mathcal{B}}^{\Pi_o}=\{(\texttt{Op}(\text{info-key}_i),\text{info-key}_i)|i=1,2\}.$$ 
In the ideal world, the simulator $\simu_{\mathcal{B}}$ will know infos in this stage. To simulate the view in $\Pi$, we construct info-keys $\reallywidehat{\text{info-key}_i},i=1,2$ and employ the the programmability of the global random oracle (Section~\ref{sec:pgro}) such that the previously generated garbage $\reallywidehat{\enc_{\text{info-key}_i}(\text{info}_i)}, i=1,2$ can be decrypted to real infos and the previously committed garbage $\reallywidehat{\texttt{Com}(\text{info-key}_i)}, i=1,2$ can be opened as $\reallywidehat{\text{info-key}_i},i=1,2$, with openings $\reallywidehat{\texttt{Op}(\text{info-key}_i)}$. Then the simulator outputs $(\reallywidehat{\texttt{Op}(\text{info-key}_i)},\reallywidehat{\text{info-key}_i})$. Based on the construction of the outputs, these outputs are indistinguishable with the buyer's real view in $\Pi$ since both of them are consistent with the previous views in their worlds respectively.

\subsection{Security against one seller: $\mathcal{S}_i$-private}
Before the Compute Payment Function stage, each seller $i$'s view is 
\begin{align*}
\view_{\mathcal{S}_i}^{\Pi_{s-s}}=&(\pi_i(Q),\texttt{Op}(\pi_i(Q)), \texttt{Com}(\pi_1(Q)),\texttt{Com}(\pi_2(Q)),\text{info}_i,\\ &\texttt{Op}(\enc_{\text{info-key}_i}(\text{info}_i)), \text{info-key}_i, \texttt{Op}(\text{info-key}_i),\\
&\texttt{Com}(\text{info-key}_{-i}),\texttt{Com}(\enc_{\text{info-key}_i}(\text{info}_{-i})))
\end{align*}

In this stage, $\simu_{\mathcal{S}_i}$ is given $\pi_i(Q),\text{info}_i$. The simulator will simulate the view by using the given inputs and additionally generating several independent random strings to simulate other information (we still add wide hat to the real views to denote the simulated ones), which is computationally indistinguishable with the real world view only except a negligible probability, according to the property of the global random oracle.

In the Compute Payment Function stage, the seller will additionally view the output of the payment function, which is also viewed by the simulator in the ideal world. Thus, the seller's view can still be simulated here. 

After the Open and Check Goods stage, the seller $i$ needs to submit $\text{info-key}_i$ and $\texttt{Op}(\text{info-key}_i)$ and additionally view
$$ (\texttt{Op}(\text{info-key}_{-i}),\text{info-key}_{-i})$$ 
In the ideal world, the simulator $\simu_{\mathcal{S}_i}$ does not gain additional knowledge as the trusted center only sends the infos to the buyer. But the simulator can still simulate the additional view by employing the the programmability of the global random oracle and output $(\reallywidehat{\texttt{Op}(\text{info-key}_{-i})},\reallywidehat{\text{info-key}_{-i}})$ that are consistent with previously outputted $\reallywidehat{\texttt{Com}(\text{info-key}_{-i})}$.

\subsection{Security against two sellers: $\{\mathcal{S}_1,\mathcal{S}_2\}$-private}
Before the Compute Payment Function stage, two sellers' view is 
\begin{align*}
(\view_{\mathcal{S}_1}^{\Pi_{s-s}},\view_{\mathcal{S}_2}^{\Pi_{s-s}})
\end{align*}

Similar with the one seller setting, the simulator $\simu_{\mathcal{S}_1,\mathcal{S}_2}$ is given $\pi_i(Q),\text{info}_i,i=1,2$ and can simulate other information in real views by generate several independent random strings. The Compute Payment Function stage's simulation is also the same as the one seller case. From now, since all traders are semi-honest, the protocol will finish in the Open and Check Goods stage and the two sellers will not observe new information. Thus, the simulation is finished.

\subsection{Security against the buyer and one seller: $\{\mathcal{B},\mathcal{S}_i\}$-private}
Before the Compute Payment Function stage, the buyer and one sellers' view in $\Pi$ is 
\begin{align*}
\view_{\mathcal{B},\mathcal{S}_i}^{\Pi_{s-s}}=&(Q,\pi_1,\pi_2, \texttt{Op}(\pi_1(Q)),\texttt{Op}(\pi_2(Q)),\text{info}_i, \texttt{Op}(\enc_{\text{info-key}_i}(\text{info}_i)),\text{info-key}_i,\texttt{Op}(\text{info-key}_i)\\ & \texttt{Op}(\text{info-key}_{-i}),\enc_{\text{info-key}_{-i}}(\text{info}_{-i}),\texttt{Op}(\enc_{\text{info-key}_{-i}}(\text{info}_{-i})),\texttt{Com}(\text{info-key}_{-i})).
\end{align*}

In this stage, the simulator is given $Q,\pi_i(Q),\text{info}_i$ in the ideal world. Still the simulator can simulate the view by generating a random order $\pi_{-i}$ and several independent random strings, without being distinguished, according to the property of commitment scheme and global random oracle.

In the Compute Payment Function stage, the buyer and seller $i$ will additionally know the output of the payment function, which is also learned by the simulator in the ideal world. 

After the Open and Check Goods stage, the buyer and seller $i$ additionally view
$$ (\texttt{Op}(\text{info-key}_{-i}),\text{info-key}_{-i})$$ 
In the ideal world, the simulator is given $\text{info}_{-i}$. Similar with the single buyer setting, the simulator can simulate the additional view by employing the the programmability of the global random oracle and output $(\reallywidehat{\texttt{Op}(\text{info-key}_{-i})},\reallywidehat{\text{info-key}_{-i}})$ that are consistent with previously outputted $\reallywidehat{\enc_{\text{info-key}_{-i}}(\text{info}_{-i})},\reallywidehat{\texttt{Com}(\text{info-key}_{-i})}$.

\subsection{Security against buyer and two sellers: $\{\mathcal{B},\mathcal{S}_1,\mathcal{S}_2\}$-private}

This case is a trivial simulation since the simulator knows all information when adversaries are semi-honest. For completeness, we still present the simulation here. 

Before the Compute Payment Function stage, the buyer and two sellers' view in $\Pi$ is 
\begin{align*}
\view_{\mathcal{B},\mathcal{S}_1,\mathcal{S}_2}^{\Pi_{s-s}}=&(Q,\{\pi_i,\texttt{Op}(\pi_i(Q)),\text{info}_i, \texttt{Op}(\enc_{\text{info-key}_i}(\text{info}_i)),\\ &\text{info-key}_i, \texttt{Op}(\text{info-key}_i)|i=1,2\}).
\end{align*}

In this stage, $\simu_{\mathcal{B},\mathcal{S}_1,\mathcal{S}_2}$ are given $(Q,\pi_i, \text{info}_i,i=1,2)$. Then the simulator will generate other information easily as SMind described. 

In the payment computation stage, all traders will additionally view the output of the payment function, which is also viewed by the simulator in the ideal world. Thus, all traders' view can still be simulated. From now, all traders will not observe new information. Thus, the simulation is finished.

Thus, we have proved that $\Pi$ is secure. Based on the sequential composition technique (Lemma~\ref{lem:sq}) and the fact that MPC is secure (Lemma~\ref{lem:mpc}), we finish the formal security proof of SMind.

\end{document}